\documentclass[sigconf]{acmart}
\usepackage{multirow}
\usepackage{listings}
\usepackage{amsthm}
\theoremstyle{remark}
\newtheorem{remark}{Remark}
\newtheorem{observation}{Observation}
\usepackage{soul}
\usepackage[ruled,vlined,linesnumbered]{algorithm2e}
\usepackage{balance}
\lstdefinestyle{mypython}{
language=Python,
basicstyle=\ttfamily\small,
keywordstyle=\color{purple}, % 关键词（如库导入）为紫色
stringstyle=\color{orange},  % 字符串为橙色
commentstyle=\color{red},    % 注释为红色
showstringspaces=false,
breaklines=true,
frame=leftline,
numbers=left,
numberstyle=\tiny\color{gray}
}

\AtBeginDocument{%
}

\copyrightyear{2026}
\acmYear{2026}
\setcopyright{cc}
\setcctype{by}
\acmConference[WWW '26]{Proceedings of the ACM Web Conference 2026}{April 13--17, 2026}{Dubai, United Arab Emirates}
\acmBooktitle{Proceedings of the ACM Web Conference 2026 (WWW '26), April 13--17, 2026, Dubai, United Arab Emirates}
\acmPrice{}
\acmDOI{10.1145/3774904.3792385}
\acmISBN{979-8-4007-2307-0/2026/04}
\begin{document}

%%
%% The "title" command has an optional parameter,
%% allowing the author to define a "short title" to be used in page headers.

\title{FS\_GPlib: Breaking the Web-Scale Barrier — A Unified Acceleration Framework for Graph Propagation Models}

%%
%% The "author" command and its associated commands are used to define
%% the authors and their affiliations.
%% Of note is the shared affiliation of the first two authors, and the
%% "authornote" and "authornotemark" commands
%% used to denote shared contribution to the research.
\author{Chang Guo}
\orcid{0009-0001-5525-0790}
\affiliation{%
	\institution{University of Science and Technology of China}
	\city{Hefei}
	\state{Anhui}
	\country{China}
}
\email{guochangcc@mail.ustc.edu.cn}

\author{Juyuan Zhang}
\orcid{0009-0008-4910-9759}
\affiliation{%
	\institution{University of Science and Technology of China}
	\city{Hefei}
	\state{Anhui}
	\country{China}
}
\email{zhangjuyuan2020@mail.ustc.edu.cn}

\author{Chang Su}
\orcid{0009-0006-1895-9000}
\affiliation{%
	\institution{University of Electronic Science and Technology of China}
    %\department{Institute of Fundamental and Frontier Sciences}
	\city{Chengdu}
	\state{Sichuan}
	\country{China}
}
\email{suchang@std.uestc.edu.cn}

\author{Tianlong Fan}
\orcid{0000-0002-9456-6819}
\authornote{Corresponding author}
% \authornotemark[1]
\affiliation{%
	\institution{University of Science and Technology of China}
    %\department{School of Cyber Science and Technology}
	\city{Hefei}
	\state{Anhui}
	\country{China}
}
\email{tianlong.fan@ustc.edu.cn}

\author{Linyuan Lü}
\orcid{0000-0002-2156-0432}
\authornotemark[1]
\affiliation{%
	\institution{University of Science and Technology of China}
    %\department{School of Cyber Science and Technology}
	\city{Hefei}
	\state{Anhui}
	\country{China}
}
\email{linyuan.lv@ustc.edu.cn}
% \authornotetext[1]{Corresponding authors.}

%%
%% By default, the full list of authors will be used in the page
%% headers. Often, this list is too long, and will overlap
%% other information printed in the page headers. This command allows
%% the author to define a more concise list
%% of authors' names for this purpose.
%\renewcommand{\shortauthors}{Trovato et al.}

%%
%% The abstract is a short summary of the work to be presented in the
%% article.
\begin{abstract}
	
	%ver2, 200 words
	Propagation models are essential for modeling and simulating dynamic processes such as epidemics and information diffusion. However, existing tools struggle to scale to large-scale graphs that emerge across social networks, epidemic networks and so on, due to limited algorithmic efficiency, weak scalability, and high communication overhead. We present \textbf{FS\_GPlib}, a unified library that enables efficient, high-fidelity propagation modeling on Web-scale graphs. FS\_GPlib introduces a dual-acceleration framework: it combines micro-level synchronous message-passing updates with macro-level batched Monte Carlo simulation, leveraging high-dimensional tensor operations for parallel execution. To further enhance scalability, it supports distributed simulation via a novel target-node-based graph partitioning strategy that minimizes communication overhead while maintaining load balance. Theoretically, we show that under ideal assumptions, the runtime of simulations converges approximately to a constant. Extensive experiments demonstrate up to \textbf{35,000× speedup} over standard libraries such as NDlib and execution of a full Monte Carlo simulation on a Web-scale (billion-edge) graph in \textbf{11 seconds} while maintaining high simulation fidelity. FS\_GPlib supports 29 propagation models—including epidemic and opinion dynamics and dynamic network models—and offers a lightweight Python API compatible with mainstream data science ecosystems. By addressing the unique challenges of modeling diffusion and cascades on the Web, FS\_GPlib provides a scalable, extensible, and theoretically grounded solution for large-scale propagation analysis in epidemiology, social media analysis, and online network dynamics. Code available at: \url{https://github.com/Allen-Ciel/FS_GPlib}.

\end{abstract}

%%
%% The code below is generated by the tool at http://dl.acm.org/ccs.cfm.
%% Please copy and paste the code instead of the example below.
%%
\begin{CCSXML}
	<ccs2012>
	<concept>
	<concept_id>10002951.10002952</concept_id>
	<concept_desc>Information systems~Data management systems</concept_desc>
	<concept_significance>500</concept_significance>
	</concept>
	</ccs2012>
\end{CCSXML}

\ccsdesc[500]{Information systems~Data management systems}

%%
%% Keywords. The author(s) should pick words that accurately describe
%% the work being presented. Separate the keywords with commas.
\keywords{Propagation Model; Acceleration Effects; Distributed Computing; Python Library; Message Passing}
%% A "teaser" image appears between the author and affiliation
%% information and the body of the document, and typically spans the
%% page.
%\begin{teaserfigure}
%  \includegraphics[width=\textwidth]{sampleteaser}
%  \caption{Seattle Mariners at Spring Training, 2010.}
%  \Description{Enjoying the baseball game from the third-base
	%  seats. Ichiro Suzuki preparing to bat.}
%  \label{fig:teaser}
%\end{teaserfigure}

% \received{20 February 2007}
% \received[revised]{12 March 2009}
% \received[accepted]{5 June 2009}

%%
%% This command processes the author and affiliation and title
%% information and builds the first part of the formatted document.
\maketitle

\newcommand\webconfavailabilityurl{https://doi.org/10.5281/zenodo.18315461}
\ifdefempty{\webconfavailabilityurl}{}{
\begingroup\small\noindent\raggedright\textbf{Resource Availability:}\\
% please change the following context to include multiple artifacts if necessary, including data, models, code, etc.
The source code of this paper has been made publicly available at \url{\webconfavailabilityurl}.
\endgroup
}

\section{Introduction}

Propagation models are fundamental tools for understanding dynamic processes on complex networks, from epidemic outbreaks to information diffusion and opinion dynamics. {On the Web—across social networks, microblogs, forums, and content-sharing platforms—these processes govern cascades, virality, and exposure.} Accurate simulation of these models enables effective analysis and informed decision-making across various fields such as epidemiology \cite{ghosh2022dynamics, jain2023optimal}, social media analysis \cite{ling2023deep, devarapalli2024estimating}, and financial risk assessment \cite{haldane2011systemic,acemoglu2015systemic}.

However, graphs routinely reach billions of edges on the Web, efficiently simulating propagation processes at this scale becomes computationally prohibitive. Existing simulation methods suffer inherent trade-offs between accuracy, efficiency, and scalability. Classical propagation models, such as Independent Cascade (IC) \cite{goldenberg2001using, goldenberg2001talk}, Susceptible-Infected-Recovered (SIR) \cite{pastor2015epidemic}, and Linear Threshold (LT) \cite{granovetter1978threshold}, entail repeated edge traversals and state updates, severely limiting their efficiency on large-scale data. Approximation techniques such as mean-field \cite{jenness2018epimodel} or $\tau$-leaping \cite{gillespie2001approximate} improve speed but sacrifice fidelity. Massively parallel methods \cite{koster2022gpu, barrett2008episimdemics} alleviate computational bottlenecks but still rely heavily on iterative node-level processing and fail to overcome fundamental scalability limitations. Integrated propagation algorithm libraries like NDlib \cite{rossetti2018ndlib}, EoN \cite{miller2020eon}, and CyNetDiff \cite{robson2024cynetdiff} provide convenient interfaces but do not fundamentally resolve these efficiency and scalability constraints. General-purpose systems for large-scale graph analytics (e.g., GraphLab \cite{gonzalez2012powergraph, gonzalez2014graphx}, Pregel \cite{malewicz2010pregel}, Giraph \cite{apacheGiraph}) provide scalable functions but lack propagation-specific optimizations.
Consequently, the fundamental performance bottleneck stemming from extensive iterative loops persists, posing significant challenges to efficient and accurate simulations on Web-scale networks.

To overcome these challenges, we propose \textbf{F}aster and more \textbf{S}calable python library for \textbf{G}raph \textbf{P}ropagation models (\textbf{FS\_GPlib}), a unified and general-purpose simulation library explicitly designed to break the Web-scale barrier in graph propagation modeling. 
FS\_GPlib adopts a dual-acceleration framework: (i) micro-level synchronous updates via message-passing computations reduce iterative operations; (ii) macro-level batched Monte Carlo simulations leverage tensor operations for parallel acceleration. Furthermore, a distributed scheme based on target-node partitioning balances workloads and minimizes inter-process communication. It provides 29 built-in models covering epidemic, opinion, and dynamic network scenarios, with an intuitive Python API for seamless integration into data science workflows. As an open-source, extensible, and well-documented framework, FS\_GPlib enables efficient large-scale propagation simulations with wide applicability in epidemiology, network science, social influence, and decision-making.

Extensive experiments demonstrate the superior performance and scalability of FS\_GPlib. Our framework completes a full Monte Carlo simulation on billion-edge networks within $\sim$\textbf{11 seconds}, marking a substantial advancement in propagation model simulations. It achieves up to \textbf{35,000× speedup} over popular standard libraries such as NDlib \cite{rossetti2018ndlib} on graphs with tens of millions of edges, while maintaining high simulation fidelity.

The remainder of this paper is organized as follows. Section~\ref{sec:preliminaries} reviews the fundamentals of graph structures and classical propagation models. Section~\ref{sec:theory} presents the theoretical foundations of our approach. Section~\ref{sec:architecture} details the architecture and API design of FS\_GPlib. Section~\ref{sec:experiment} reports extensive experiments validating the accuracy, efficiency, and scalability of our framework. Section~\ref{sec:summary} concludes the paper and outlines directions for future work.

\section{Preliminaries}\label{sec:preliminaries}

In this section, we introduce essential concepts related to graphs, the data format adopted in our framework, and the fundamental paradigm of mechanism-based propagation models.

\begin{definition}[Graph]
	A graph is $G=(V,E)$ with node set $V$ and edge set $E$. An edge $e \in E$ is either a tuple $(u,v)$ or a triple $(u,v,w)$, where $u$ and $v$ denote the source and target nodes, and $w$ is an optional edge weight. Graphs with weights are weighted; otherwise unweighted.
\end{definition}

\begin{definition}[Compressed Sparse Row, CSR]
	The CSR format compactly represents a sparse adjacency matrix by storing only nonzero elements. It consists of three arrays (row pointers, column indices, and values) of length proportional to $|E|$. For unweighted graphs, the values array can be omitted to save memory.
\end{definition}

We store graphs in CSR format, which integrates efficiently with message passing for node state updates. 
Compared to adjacency matrices (which suffer from quadratic space complexity) and adjacency lists (which do not support matrix operations), CSR offers a scalable compromise for large-scale simulations.

We now define the formal paradigm of \textit{mechanism-based propagation models}, which is central to this study.

\begin{definition}[Mechanism-Based Propagation Model]
	A propagation model governed by predefined rules and parameters, independent of historical data.
\end{definition}

These models originated from epidemiology and sociology~\cite{kermack1927contribution, daley1964epidemics}, with classical examples such as the SIR and IC models. More recent extensions incorporate probabilistic rules and structural heterogeneity~\cite{sun2020influence, li2022influence, kempe2003maximizing}.

\begin{definition}[Paradigm of Mechanism-Based Propagation Model]
	A mechanism-based propagation model consists of:
	\begin{itemize}
		\item A graph structure $G=(V,E)$.
		\item A node state space $\Sigma$.
		\item A probability space $\Omega$ representing all random variables and parameters (e.g., activation probabilities, thresholds).
		\item A discrete-time update sequence $\{(k, V_k)\ |\ k=0,1,\dots,K-1\}$, where $V_k \subseteq V$ is the set of nodes updated at step $k$, determined deterministically or sampled from $\Omega$.
		\item A node update function $F_v: \Sigma^{N^-(v)\cup \{v\}} \times \Omega \rightarrow \Sigma$, which maps the previous-round states of node $v$ and its in-degree neighbors $N^-(v)$ to the updated state.
	\end{itemize}
	
	Given initial states $\{X_{v,0} \in \Sigma\ |\ v \in V\}$ and random factors $\omega \in \Omega$, the propagation process proceeds iteratively. At each step $k$, for each $v \in V_k$, the state is updated as:
	\begin{equation}\label{eq:nodestate}
		X_{v,k} = F_v(X_{v,k-1}, \{X_{u,k-1}: u \in N^-(v)\}, \omega).
	\end{equation}
	States of nodes not in $V_k$ remain unchanged.
\end{definition}

The implementation of mechanism-based propagation models typically involves four nested loops (Algorithm ~\ref{code}). The outermost loop involves Monte Carlo simulations to estimate outcomes under probabilistic conditions $\omega\in\Omega$. Within each simulation, the model iterates over $K$ discrete steps, updating designated nodes at each step. Updating a node further involves iterating over its neighbors or edges to compute their influence. As a result, increasing the number of simulations, time steps, nodes, or edges leads to a significant increase in computational cost.

\begin{algorithm}
{\color{black}\caption{Traditional mechanism-based propagation}
\label{code}
\KwIn{$G=(V,E)$, time steps $K$, Monte Carlo simulations $S$}
\KwOut{$\mathcal{X}^K=\{X_s^K\}_{s=1}^S$}
Initialize an empty container $\mathcal{X}^K$\;
\For{$s \gets 1$ \KwTo $S$}{
    Initialize states of nodes $X^0 = \{x_i^0 \mid i \in V\}$\;
%    $k \gets 0$\;
    \For{$k \gets 0$ \KwTo $K-1$}{
        $X^{k+1} \gets X^k$\;

        \For{$i \in V$}{
            Decide whether $x_i^k$ needs to update\;
            \If{update is required}{
                Use $\{x_j^k \mid j \in N(i)\}$ or a random probability\;
                Update $x_i^{k+1}$ accordingly\;
            }
            \Else{
                $x_i^{k+1} \gets x_i^k$\;
        }
    }
    Record $X^{k+1}$\;
%        $k \gets k+1$\;
    }
    Append $X^K$ to $\mathcal{X}^K$\;
}
\Return $\mathcal{X}^K$
}
\end{algorithm}

\section{Theoretical Foundations of FS\_GPlib}\label{sec:theory}

This section details the theoretical grounding of the dual-accelerated propagation framework and distributed simulation methods introduced by FS\_GPlib. We illustrate their implementations via two representative propagation models: the SIR and Hegselmann-Krause (HK) models in Appendix ~\ref{SIRHK}.

\subsection{Dual-Acceleration Framework via Message Passing}

Propagation models typically require numerous time steps and Monte Carlo runs for tracking node states over time and estimating outcomes. FS\_GPlib addresses computational bottlenecks through a dual-acceleration strategy that combines micro-level synchronous node-state updates and macro-level batched Monte Carlo computations (Fig.~\ref{fig:MessagePassing}).

\begin{figure}[h]
	\centering
	\includegraphics[width=\linewidth]{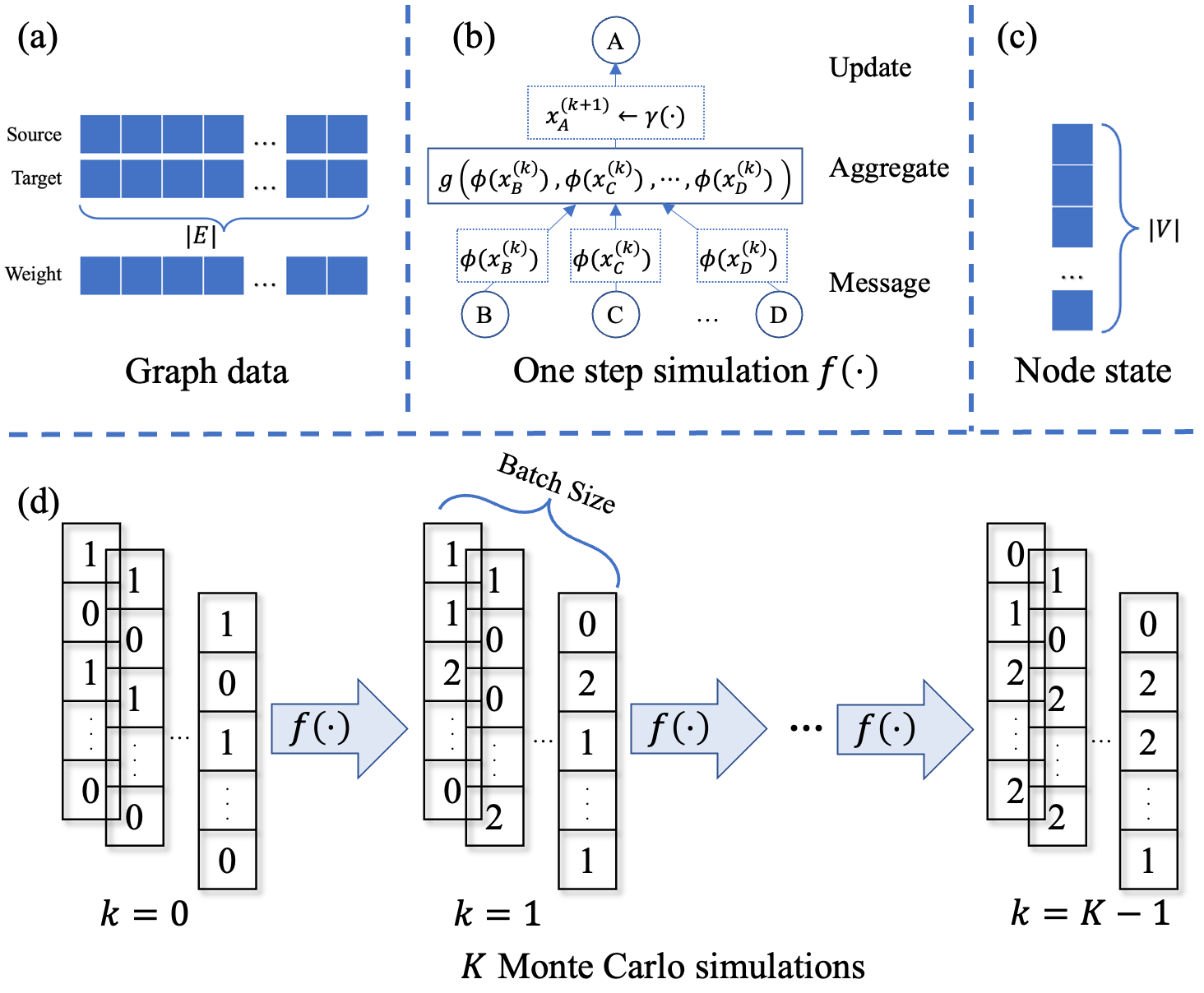}
	\caption{Architecture of the dual-acceleration propagation framework based on micro-level message passing and macro-level batched Monte Carlo computations.}
	\Description{}
	\label{fig:MessagePassing}
\end{figure}

\paragraph{Micro-level Acceleration via Node-Synchronous Computation}
Traditional propagation algorithms repeatedly traverse nodes and their neighbors, resulting in significant computational redundancy. To implement efficiently, we rely on the following observation about the locality of node activation.

\begin{observation}

In the mechanism propagation model, the evolution pattern of nodes is similar to the forward propagation paradigm of the Message Passing framework, including message passing, aggregation, and state updating (Fig.~\ref{fig:MessagePassing}(b)).
\end{observation}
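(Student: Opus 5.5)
The observation is a structural claim. I would make it precise by showing that the node update rule in Eq.~\eqref{eq:nodestate} factors into the three stages of a message-passing layer. Concretely, I will exhibit three maps: a message function $\phi$, a permutation-invariant aggregation $\bigoplus$, and an update function $\gamma$. They should satisfy
\begin{equation*}
X_{v,k} = \gamma\Bigl(X_{v,k-1}, \bigoplus_{u\in N^-(v)} \phi(X_{u,k-1}, X_{v,k-1}, w_{uv}, \omega_{uv}),\ \omega_v\Bigr)
\end{equation*}
for every $v\in V_k$, with $X_{v,k}=X_{v,k-1}$ for $v\notin V_k$.

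The second case can be absorbed into $\gamma$ by adding an indicator $\mathbf{1}[v\in V_k]$, which is sampled from $\Omega$. The proof therefore reduces to showing that each $F_v$ admits such a factorization.

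The plan proceeds in three steps.

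\textbf{Step 1 (general existence).} Take $\phi$ to be the identity map on the tuple $(u, X_{u,k-1}, w_{uv}, \omega_{uv})$, and take $\bigoplus$ to be multiset union. The aggregated message is then exactly the multiset of neighbor states together with their random factors. Setting $\gamma := F_v$ makes the factorization trivial. This shows the paradigm of Definition (Paradigm of Mechanism-Based Propagation Model) always embeds into message passing. The one requirement is that $F_v$ treats $N^-(v)$ as an unordered collection; otherwise node identifiers must be kept inside the messages.

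\textbf{Step 2 (efficient instances).} For the models the paper actually targets, I will show that $\bigoplus$ can be chosen as a sum, which is what makes CSR sparse matrix--vector products applicable. The two representative models of Appendix~\ref{SIRHK} work as follows.
\begin{itemize}
\item \emph{SIR.} Let the message be $\phi=\mathbf{1}[X_u=I]\cdot\log(1-\beta)$, aggregated by sum. Then $\gamma$ infects $v$ with probability $1-\exp(\text{aggregate})$. Equivalently, with independent per-edge Bernoulli draws, $\gamma$ infects $v$ if the sum is positive.
\item \emph{HK.} Take the message to be $(x_u\mathbf{1}[|x_u-x_v|\le\epsilon],\,\mathbf{1}[|x_u-x_v|\le\epsilon])$, aggregated by sum. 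Then $\gamma$ returns the ratio of the two sums.
\end{itemize}
In both cases the independence of per-edge random factors in $\Omega$ is what allows $\omega$ to be split into edge-local and node-local parts.

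\textbf{Step 3 (synchrony).} Because every $F_v$ reads only round-$(k-1)$ states, all $v\in V_k$ can be computed simultaneously from $X_{\cdot,k-1}$ without changing the distribution of the trajectory. This is exactly the synchronous forward pass of one message-passing layer, repeated $K$ times.

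The main obstacle is Step 2 in general. This means characterizing when $\bigoplus$ can be a cheap commutative monoid rather than multiset union. The difficulty lies in models whose update depends on an ordered or sampled single neighbor, such as voter-type models. For those I would first try to recast the random neighbor choice as a sum of neighbor states weighted by a one-hot random edge mask drawn from $\Omega$. This keeps the aggregation a sparse product at the cost of sampling one mask per node.
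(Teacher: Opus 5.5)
Your proposal is correct, and it proves more than the paper does: the paper gives no proof of this observation. The paper states it as a heuristic and supports it only by construction, through the $\phi$/$g$/$\gamma$ template of Eqs.~\eqref{eq:all}--\eqref{eq:updata} and the two worked instances in Appendix~\ref{SIRHK}. Those instances match your Step~2 almost verbatim: the summed message $h_j\log(1-\beta)$ passed through $1-\exp(\cdot)$ for SIR, and the indicator/opinion pair whose ratio gives the new HK opinion. Your Step~3 is likewise just the synchronous semantics already built into Eq.~\eqref{eq:nodestate}.

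The genuinely new ingredient is Step~1. By letting each message carry the sender's identity and aggregating by multiset union, you show that \emph{every} update rule $F_v$ of the paradigm factors through a message/aggregate/update triple. So the observation holds for the whole model class, not just for two illustrations. What this does not buy is efficiency. Multiset union is not a fixed-size commutative reduction, so it does not become a sparse product over the CSR structure. The property the paper actually exploits downstream, a sum-type aggregation that maps onto tensor operations, therefore still rests on case-by-case constructions, in your proposal exactly as in the paper. Hence your ``main obstacle'' blocks only a general efficiency statement, which neither you nor the paper needs for the observation itself.

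A few small repairs:
\begin{itemize}
\item In Step~1, let $\gamma$ receive the whole per-step random element $\omega^{(k)}$, as in Eq.~\eqref{eq:all}. $F_v$ may depend on $\omega$ globally, and the paradigm guarantees no split into $\omega_{uv}$ and $\omega_v$.
\item The HK update needs the fallback $h_v^{(k)}=h_v^{(k-1)}$ when no neighbor lies within $\epsilon$ (the paper's NaN branch).
\item The SIR $\gamma$ must also implement the node-local $I\to R$ transition.
\item Per-edge independence plays no role for HK, which is deterministic. For SIR, its role is to justify replacing the per-edge draws by the single node-level uniform $U_i^{\mathrm{inf}}$.
\end{itemize}
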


We propose a micro-level node-synchronous computation method that aggregates neighbor information for every node at once and updates all node states simultaneously, thereby eliminating redundant traversals and enhancing computational efficiency.

\begin{definition}[Node-Synchronized Update Method]
	Let $\boldsymbol{x}_i^{(k)}\in\Sigma$ denote the state of node $i$ at step $k$, and $\omega^{(k)}$ represent the random component at step $k$. At each time step, source nodes transmit messages to destination nodes, which then aggregate the received messages and update their states as:
	\begin{equation}
		\boldsymbol{x}_i^{(k)}=\gamma\left(\boldsymbol{x}_i^{(k-1)},\underset{j\in\mathcal{N}(i)}{g}\phi(\boldsymbol{x}_i^{(k-1)},\boldsymbol{x}_j^{(k-1)},\boldsymbol{e}_{j,i}, \omega^{(k)}), \omega^{(k)}\right),
		\label{eq:all}
	\end{equation}
	where functions $\phi(\cdot)$, $g(\cdot)$ and $\gamma(\cdot)$ encode message passing, aggregation, and node-state updating, respectively.
\end{definition}

In propagation models, nodes are represented by one‐dimensional features, and some state transitions depend solely on probabilities rather than neighbor states. Therefore, in simulating a single time step of the propagation model, the states of neighboring nodes are aggregated first:
\begin{equation}
	\boldsymbol{m}_{ji}^{\left(k\right)}=\phi\left(\boldsymbol{x}_i^{\left(k-1\right)},\boldsymbol{x}_j^{\left(k-1\right)},\boldsymbol{e}_{j,i},  \omega^{(k)}\right),
	\label{eq:ms}
\end{equation}
function $\phi(\cdot)$ is defined according to the propagation‐model mechanism and typically takes edge information, the states of the source and target nodes from the previous time step, and the random element sampled at this step or initially. Then aggregate the received information:
\begin{equation}
	\boldsymbol{m}_i^{\left(k\right)}=g\left(\{\boldsymbol{m}_{ji}^{\left(k\right)}\middle| j\in\mathcal{N}\left(i\right)\}\right),
	\label{eq:agg}
\end{equation}
the aggregated information $\boldsymbol{m}_i^{\left(k\right)}$ is used to update the node status:
\begin{equation}
	\boldsymbol{x}_i^{\left(k\right)}=\gamma\left(\boldsymbol{x}_i^{\left(k-1\right)},\boldsymbol{m}_i^{\left(k\right)},  \omega^{(k)}\right),
	\label{eq:updata}
\end{equation}
during the node state update step, the update method for each target node is determined based on its current state $\boldsymbol{x}_i^{\left(k-1\right)}$. It may be updated using neighbor information $\boldsymbol{m}_i^{\left(k\right)}$, or, if its update is independent of neighbors, via other parameters.

Micro-level acceleration is achieved by performing the above operations within tensor computations to compute all nodes in parallel, reducing interpreter overhead and exploiting CPU/GPU parallelism.

\paragraph{Macro-level Acceleration via Batched Monte Carlo Simulations}
Monte Carlo simulations repeatedly execute identical algorithms on the same graph. Inspired by deep learning batch processing, we propose:

\begin{definition}[Batched Monte Carlo Method]
	Node states from $B$ Monte Carlo simulations are stacked into a tensor $\mathcal{X}\in\mathbb{R}^{B\times|V|}$, enabling parallel propagation computations:
	\begin{equation}
		f(\mathcal{X},\{\omega\}^B),
		\label{eq:batch}
	\end{equation}
	where each simulation within the batch independently samples randomness $\omega$.
\end{definition}

Figure \ref{fig:MessagePassing}(d) illustrates the batch-parallel method. In this approach, multiple initial node-state vectors are stacked into a high-dimensional tensor, and forward propagation is executed as a single batched operation across $K$ discrete time steps. This enables simultaneous execution of multiple Monte Carlo simulations through vectorized operations.

Compared to conventional multi-process approaches—which distribute Monte Carlo simulations across CPU cores—batch parallelism consolidates multiple node states into a single high-dimensional tensor, enabling all runs to be executed concurrently within one forward-propagation pass. This design avoids repeated model instantiation and redundant operator invocations, yielding higher computational efficiency. Proper tuning of the batch size also ensures better GPU utilization.

By integrating both micro-level (synchronous node updates) and macro-level (batched Monte Carlo) acceleration, the proposed framework significantly reduces simulation runtime while preserving the time-step iteration logic. This enables fine-grained tracking of node-state evolution over time and efficient estimation of the expected propagation range.

\subsection{Distributed Simulation via Target-node-based Partitioning}\label{subs:partition}

Simulating propagation on Web-scale graphs poses significant challenges in memory consumption and computational efficiency. Partitioning strategies are typically categorized into \emph{vertex partitioning} and \emph{edge partitioning}. {Vertex partitioning methods, such as METIS~\cite{karypis1998fast}, aim to minimize cross-partition communication, while edge partitioning emphasizes load balancing and reducing vertex replication.

FS\_GPlib adopts a distributed simulation strategy that partitions the graph across multiple processes based on target nodes, aiming to balance computational loads while minimizing communication overhead (Fig.~\ref{fig:Distributed}). In propagation models, target nodes aggregate messages from source nodes to update their state. If a target spans multiple partitions, inter-process communication is needed to exchange partial messages, adding cost and complexity. FS\_GPlib avoids this by assigning each target node to exactly one partition, enabling local aggregation and updates without cross-partition messaging. After local updates, an \texttt{all\_reduce} synchronizes target-node states across processes, and the updated global state is broadcast for the next iteration—ensuring correctness while minimizing redundant communication.

\begin{figure}[h]
	\centering
	\includegraphics[width=\linewidth]{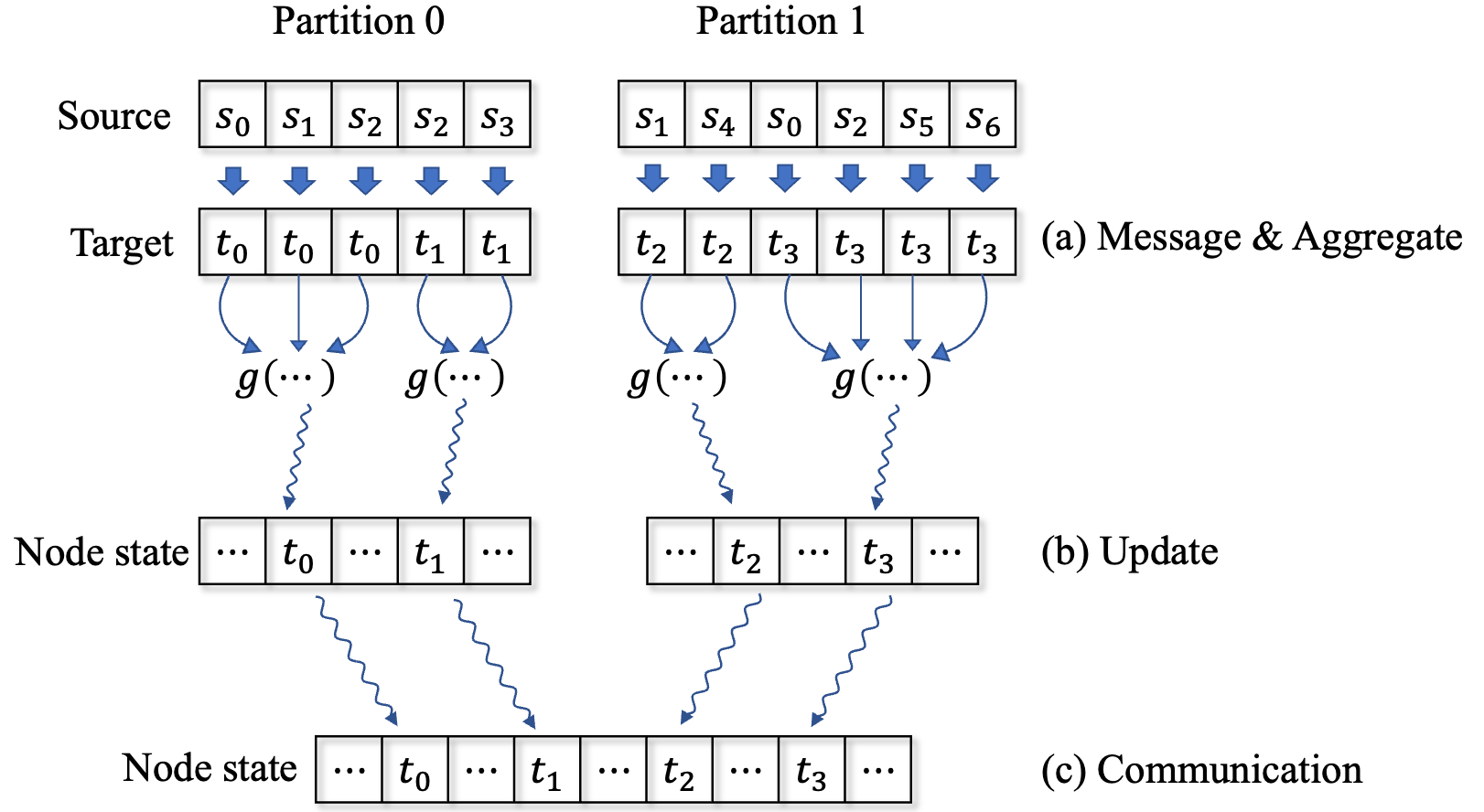}
	% \caption{Distributed propagation simulation method.}
	\caption{Distributed propagation simulation strategy with target-node-based partitioning.}
	\Description{}
	\label{fig:Distributed}
\end{figure}

\begin{definition}[Target-Node-Based Partitioning via Longest Processing Time Heuristic]
	Sort nodes by non-increasing in-degree to obtain the permutation $\pi=(\pi_1,\pi_2,\dots,\pi_N)$. 
	Let the prefix sum of weights be $C_i=\sum_{n=1}^i w_{\pi_n}$ and the total weight $W=C_N$. 
	Each node $\pi_i$ is assigned to partition
	\begin{equation}
		q(i)=\left\lceil \tfrac{d\,C_i}{W}\right\rceil,\qquad q(i)\in\{1,\dots,d\},
	\end{equation}
	where $d$ is the number of partitions. Define the partitioned target-node sets and induced edge sets by
	\begin{equation}
		P_q=\{\pi_i \mid q(i)=q\}, \qquad 
		E_q=\{(u,v)\in E \mid v\in P_q\}.
	\end{equation}
	
	This guarantees every target node belongs to exactly one partition, and each process handles a subgraph with approximately balanced edge load.
\end{definition}

{\color{black}\begin{proposition}[Load Balance Guarantee]
	Let $w_{\max}=\max_j w_j$. For each partition $q$, the edge load satisfies
	\begin{equation}
	\left|\sum_{i\in P_q} w_{\pi_i}-\frac{W}{d}\right|< w_{\max},
	\end{equation}
	thus the workload deviation from the ideal average $W/d$ is bounded by at most the maximum in-degree.
\end{proposition}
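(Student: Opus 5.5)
The plan is to show that each $P_q$ is a contiguous block of the sorted order $\pi$, and then to bound the block's weight with two telescoping prefix-sum estimates. Throughout, the weights $w_j$ (the in-degrees) are nonnegative and $W>0$. Then $C_0=0\le C_1\le\dots\le C_N=W$ is nondecreasing. The key reformulation is that $q(i)=q$ holds exactly when
\begin{equation*}
\frac{(q-1)W}{d} < C_i \le \frac{qW}{d}.
\end{equation*}
Because $C_i$ is monotone in $i$, the set of indices $i$ with $q(i)=q$ is an interval $\{a,\dots,b\}$ of consecutive positions. Its weight telescopes:
\begin{equation*}
\sum_{i\in P_q} w_{\pi_i}=C_b-C_{a-1}.
\end{equation*}

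For a nonempty block I would control the two endpoints separately, using the boundaries of the block.
\begin{itemize}
\item \emph{Left end.} Index $a-1$ is not in block $q$; either $a=1$ with $C_0=0$, or $q(a-1)\le q-1$. In both cases $C_{a-1}\le (q-1)W/d$. Index $a$ is in the block, so $C_{a-1}+w_{\pi_a}=C_a>(q-1)W/d$.
\item \emph{Right end.} Index $b$ is in the block, so $C_b\le qW/d$. If $b<N$, index $b+1$ lies in a later block, so $C_b+w_{\pi_{b+1}}>qW/d$. If $b=N$, then $C_b=W$ and hence $q=d$, so $C_b=qW/d$ exactly.
\end{itemize}
Combining these inequalities gives both bounds:
\begin{equation*}
C_b-C_{a-1} < \frac{qW}{d}-\Bigl(\frac{(q-1)W}{d}-w_{\pi_a}\Bigr)=\frac{W}{d}+w_{\pi_a}\le \frac{W}{d}+w_{\max},
\end{equation*}
\begin{equation*}
C_b-C_{a-1} > \frac{qW}{d}-w_{\pi_{b+1}}-\frac{(q-1)W}{d}\ge \frac{W}{d}-w_{\max}.
\end{equation*}
When $b=N$ the lower bound improves to $\ge W/d$. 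That is still strictly above $W/d-w_{\max}$ because $w_{\max}>0$.

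The main obstacle is bookkeeping rather than depth. We have to check that strict inequalities survive at the boundary cases $a=1$ and $b=N$, and handle the degenerate case where some partition $P_q$ is empty.
\begin{itemize}
\item \emph{Empty block.} The load is then $0$, so the claim reduces to showing $W/d<w_{\max}$. Let $i$ be the first index with $C_i>(q-1)W/d$. If $P_q$ is empty, then $C_i>qW/d$ while $C_{i-1}\le (q-1)W/d$, so $w_{\pi_i}>W/d$. Such an $i$ exists: if $q<d$, then $C_N=W>(q-1)W/d$; if $q=d$, the block cannot be empty since $C_N=W=dW/d$.
\item \emph{Upper bound for an empty block.} Here $|0-W/d|=W/d<w_{\max}$ as required.
\end{itemize}
Finally, I would remark that the sorting by non-increasing in-degree is not needed for the bound itself; only contiguity along some fixed order matters. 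The sorting only tends to place heavy nodes early, which makes the realised deviations smaller in practice.
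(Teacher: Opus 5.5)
Your proof is correct and complete. The paper states this proposition without proof: its appendix proves only the constant-time theorem and its corollary. So there is no argument of the authors' to compare against, and your proof supplies the missing justification.

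The key steps all hold:
\begin{itemize}
\item the reformulation $q(i)=q \iff (q-1)W/d < C_i \le qW/d$;
\item contiguity of each block, so its load telescopes to $C_b-C_{a-1}$;
\item the strict upper estimate, which comes from $C_a>(q-1)W/d$;
\item the strict lower estimate, which comes from $C_{b+1}>qW/d$, together with the exact case $b=N$, $q=d$;
\item the empty-block case, where a single node must straddle the whole interval $\bigl((q-1)W/d,\,qW/d\bigr]$, forcing $w_{\max}>W/d$.
\end{itemize}
Together these give $\bigl|\sum_{i\in P_q}w_{\pi_i}-W/d\bigr|<w_{\max}$ for every $q$. You also correctly analyse the rule as actually defined, namely the ceiling of normalised prefix sums, not the greedy least-loaded assignment that the ``Longest Processing Time'' label suggests.

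One refinement to your closing remark. You are right that the bound only needs a fixed order, but the non-increasing sort is not merely cosmetic. It guarantees $C_1=w_{\max}>0$, hence $q(i)\ge 1$ for every $i$. In an arbitrary order, leading zero-weight nodes would get $q(i)=\lceil 0\rceil=0$ and would lie in none of $P_1,\dots,P_d$, breaking the claim that every target node is in exactly one partition.

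Your further claim that sorting makes realised deviations smaller is not supported by your own analysis. A block's deviation is controlled by the weights of the nodes straddling its two cuts, and descending order makes those nodes heavy at the early cuts. That claim is best dropped.
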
}

\begin{remark}[Communication Efficiency]
	Since each target node appears in exactly one partition, its state can be updated locally without inter-process communication. Consequently, at each time step, only the updated target-node states need to be communicated across processes, yielding a communication complexity of $O(|V|)$ per step, independent of the total number of edges $|E|$.
\end{remark}

\subsection{Theoretical Analysis}\label{sec:guarantee}

A key theoretical contribution of FS\_GPlib lies in its theoretically grounded ability to achieve near-constant runtime complexity through carefully designed micro-level synchronous node-state updates and macro-level batch-parallel Monte Carlo execution.

In line with most real-world graphs, we assume that the number of edges scales linearly with the number of nodes, i.e., $m = \Theta(n)$. We consider discrete-time propagation models, where each node updates its state in constant time based on neighbor states and random variables.

\begin{theorem}[Constant-Time Micro-Level Node Synchronization]\label{thm:constant_micro}
    Let $G=(V,E)$ be a sparse graph with $|V|=n$ nodes and $|E|=m=\Theta(n)$ edges. Consider a synchronous node-update rule (Eq.~\ref{eq:all}) under a message-passing-based propagation model, where each node's update is computed from its immediate neighbors' states or random variables.
    
    Assuming an ideal parallel execution environment where $p = \Theta(m)$ processing units can concurrently handle all $m$ edges without sequential fallback, and runtime overhead is negligible, the runtime per synchronous update step is:
    \begin{equation}
        T_{\text{step}} = O(1).
    \end{equation}

\end{theorem}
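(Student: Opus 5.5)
The plan is to decompose one synchronous step of Eq.~\ref{eq:all} into its three phases: message computation (Eq.~\ref{eq:ms}), aggregation (Eq.~\ref{eq:agg}) and node-state update (Eq.~\ref{eq:updata}). I would bound the parallel depth of each phase separately under the stated idealized model, then add the three bounds. The model I would fix is a PRAM with $p=\Theta(m)$ processors. Since $m=\Theta(n)$, there is a constant $c$ with $n\le c\,p$ and $m\le c\,p$, so every node and every edge can be given its own processor, up to a constant factor of time-sharing. With the graph in CSR format, the index arrays give each processor constant-time access to its edge $(j,i)$, the weight $\boldsymbol{e}_{j,i}$, and the state entries $\boldsymbol{x}_i^{(k-1)},\boldsymbol{x}_j^{(k-1)}$.

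\emph{Message phase.} Each edge processor evaluates $\phi(\boldsymbol{x}_i^{(k-1)},\boldsymbol{x}_j^{(k-1)},\boldsymbol{e}_{j,i},\omega^{(k)})$. The paper assumes that each update is computed in constant time from one-dimensional states and random variables. So each $\phi$ call is $O(1)$ work, including drawing its random sample, and all $m$ calls run concurrently. The phase therefore has depth $O(1)$; concurrent reads of a shared $\boldsymbol{x}_j^{(k-1)}$ are allowed in the model.

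\emph{Update phase.} Each node processor evaluates $\gamma(\boldsymbol{x}_i^{(k-1)},\boldsymbol{m}_i^{(k)},\omega^{(k)})$ in $O(1)$ time. This includes the branch on the current state and the neighbor-independent transitions, which depend only on sampled probabilities. The $n\le c\,p$ nodes are processed in parallel, so this phase is also $O(1)$.

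\emph{Aggregation phase.} I expect this to be the main obstacle. A target $i$ with in-degree $d_i$ must combine $d_i$ messages, and an exclusive-write PRAM needs $\Theta(\log d_i)$ depth for that, up to $\Theta(\log n)$ on heavy-tailed graphs. The aggregators $g$ used in the propagation models of the paper are sum, count, logical OR ("some neighbor transmitted") and max or min. All of these are associative and commutative. I would therefore use the theorem's hypothesis that edges are handled "without sequential fallback" and with negligible runtime overhead, and read it as a combining CRCW PRAM, which is what a scatter-add or atomic \texttt{index\_add} kernel provides. In that model every edge processor writes $\boldsymbol{m}_{ji}^{(k)}$ into slot $i$ in one step, and the hardware combines the writes. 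The whole aggregation then costs $O(1)$.

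Adding the three phases gives $T_{\text{step}}=O(1)+O(1)+O(1)=O(1)$, and the barriers between phases add only constant overhead under the negligible-overhead assumption. In a remark I would state what happens without combining writes: the same argument then gives $O(\log \Delta^-)$, where $\Delta^-$ is the maximum in-degree. This makes clear that the constant bound depends on the idealized aggregation assumption.
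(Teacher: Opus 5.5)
Your proof is correct under the idealized model, but it takes a different route from the paper's. The paper uses a pure work-division argument. One step costs $O(m)$ sequential work, dominated by edge operations, and splitting it over $p=\Theta(m)$ processors leaves $O(1)$ work per processor, so $T_{\text{step}}=O(m/p)=O(1)$. In Brent's terms ($T_p\le W/p+D$), the paper bounds only the $W/p$ term. It lets ``ideal parallel execution without sequential fallback'' silently absorb the depth $D$, including the dependency chain created when a target combines its $d_i$ incoming messages.

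Your phase-by-phase span analysis makes that hidden step explicit. The message and update phases are trivially constant-depth. Aggregation becomes constant-depth once concurrent writes are combined by an associative, commutative operator (what a scatter-add or atomic \texttt{index\_add} kernel provides); otherwise you get $O(\log \Delta^-)$. The paper's version is shorter and mirrors the informal hypothesis directly. Yours costs a few more lines, but it identifies which idealization the bound actually depends on. It also shows that on heavy-tailed graphs, where $\Delta^-$ can be polynomial in $n$, the constant bound comes from the combining-write assumption and not from $p=\Theta(m)$ alone.

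One small point: plain CSR does not give the processor for edge slot $e$ its row index in $O(1)$ time, because finding it requires a search of the row-pointer array. You should therefore assume the expanded edge list (COO form, which PyG stores anyway) is precomputed once. That preprocessing does not depend on $k$, so it does not affect $T_{\text{step}}$.
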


\begin{corollary}[Constant-Time Macro-Level Batched Simulation]\label{cor:macro_batch}
    Building on Theorem~\ref{thm:constant_micro}, suppose we execute $S$ independent Monte Carlo simulations simultaneously via batched tensor computations, where each simulation involves $K=O(1)$ propagation steps.
    
    Assuming an ideal execution environment where the device has sufficient capacity for $B=\Theta(S)$ copies of data to run simultaneously, and ignoring the runtime overhead, the overall time complexity is:
    \begin{equation}
    T_{\text{all}} = O(1).
    \end{equation}

\end{corollary}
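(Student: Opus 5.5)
The plan is to reduce the batched computation to Theorem~\ref{thm:constant_micro}, applied to a single enlarged ``virtual'' graph. Concretely, I will view the $B$ stacked simulations in the tensor $\mathcal{X}\in\mathbb{R}^{B\times|V|}$ as $B$ disjoint copies of $G$. That is a graph $G^{(B)}$ with $Bn$ nodes and $Bm$ edges, where copy $b$ carries its own independently sampled randomness $\omega_b$. One batched forward step $f(\mathcal{X},\{\omega\}^B)$ from Eq.~\ref{eq:batch} is exactly one node-synchronous update (Eq.~\ref{eq:all}) on $G^{(B)}$. The message, aggregation and update functions $\phi,g,\gamma$ act copy-wise, and no edges cross copies. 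So the message $\boldsymbol{m}_{ji}^{(k)}$ for a given copy depends only on data of that copy.

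First, I will verify that the hypotheses of Theorem~\ref{thm:constant_micro} carry over to $G^{(B)}$. Its edge count $Bm=\Theta(Sm)$ is still linear in its node count $Bn$, so it is sparse in the required sense. The assumed device capacity for $B=\Theta(S)$ copies gives $\Theta(Bm)$ concurrent processing units, each responsible for one edge-copy. Per-edge message computation therefore takes $O(1)$ time. The aggregation $g$ and the update $\gamma$ are handled exactly as in the theorem, by whatever ideal parallel primitive the theorem already granted. Sampling $\omega^{(k)}$ per copy is likewise a pointwise operation costing $O(1)$ per unit. Hence each batched step costs $T_{\text{step}}=O(1)$.

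Second, I will sum over the $K$ synchronous steps. These steps are inherently sequential, since step $k$ depends on step $k-1$, giving $T_{\text{all}}=K\cdot O(1)$. With $K=O(1)$ this is $O(1)$. All $S$ simulations finish within this single pass because $B=\Theta(S)$. If $B$ is a constant fraction of $S$, the remaining $\lceil S/B\rceil=O(1)$ batches add only a constant factor. Recording the final states is pointwise and also $O(1)$, and overhead is ignored by assumption.

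The main obstacle I expect is the aggregation step: a high-in-degree target seems to require more than $O(1)$ time to reduce its incoming messages. I will not re-prove this. Instead I will inherit whatever idealization Theorem~\ref{thm:constant_micro} uses, namely concurrent edge handling ``without sequential fallback,'' and argue that batching does not worsen it. The reason is that aggregation in copy $b$ involves only copy-$b$ messages, so the reductions for different copies are independent and run side by side. The cost of a batched step is therefore the maximum over copies, not their sum, and equals the single-copy bound.
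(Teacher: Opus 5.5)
Your proof is correct, but it is organized differently from the paper's.

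The paper argues by amortization. By Theorem~\ref{thm:constant_micro} and $K=O(1)$, one simulation costs $O(1)$, so $S$ serial simulations cost $O(S)$. It then asserts that batching divides this by $B$, giving $O(S)/B=O(S/S)=O(1)$.

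You argue by reduction instead. A batched step is one synchronous update on the disjoint union $G^{(B)}$, which has $Bn$ nodes and $Bm$ edges and is still sparse. The theorem therefore applies to it directly with $\Theta(Bm)$ units, and you then multiply by $K$ steps and by $\lceil S/B\rceil=O(1)$ batches.

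The paper's version is shorter, but its key step is stated rather than derived: that $B$ simulations ``share tensor operations'' at the cost of one, i.e.\ an ideal $B$-fold speedup. It also leaves implicit that this requires $\Theta(Sm)$ processing units. Your reduction gains several things:
\begin{itemize}
\item it obtains the speedup from the theorem itself;
\item it makes the resource requirement explicit;
\item it shows why batching cannot worsen the aggregation idealization, since no edges cross copies, so the per-copy reductions are independent and the step cost is a maximum rather than a sum;
\item it covers the case where $B$ is only a constant fraction of $S$.
\end{itemize}

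Both arguments rest on the same idealization of $O(1)$-time aggregation that the theorem grants, which you rightly inherit rather than re-prove.
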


\noindent See the Appendix~\ref{2proof} for the proofs of the above theorem and corollary.

These theoretical results explain FS\_GPlib's empirically observed high efficiency and scalability: with constant-time micro-level synchronization and macro-level batching, our framework presents constant time complexity on graphs below a certain size, and enables efficient simulation on billion-scale graphs.

\section{Architecture of FS\_GPlib}\label{sec:architecture}

This section introduces the architecture, usage, and customization capabilities of FS\_GPlib.

\subsection{Ensemble Algorithms}
\begin{table*}[htbp]
	\centering
	\caption{Summary of built-in propagation models in FS\_GPlib.}
	\small
	\begin{tabular}{llccc}
		\toprule
		\textbf{Model Class} & \textbf{Model Name} & \textbf{Weighted Graph Support} & \textbf{Iteration Mode}* & \textbf{State Types} \\
		\midrule
		\multirow{13}{*}{Epidemic Models}
		& SI & Yes & Synchronous Model & Discrete \\
		& SIS & Yes & Synchronous Model & Discrete \\
		& SIR & Yes & Synchronous Model & Discrete \\
		& SEIR(DT) & Yes & Synchronous Model & Discrete \\
		& SEIR(CT) & Yes & Synchronous Model & Discrete \\
		& SEIS(DT) & Yes & Synchronous Model & Discrete \\
		& SEIS(CT) & Yes & Synchronous Model & Discrete \\
		& SWIR & Yes & Synchronous Model & Discrete \\
		& Threshold & Yes & Synchronous Model & Discrete \\
		& Kertesz Threshold & Yes & Synchronous Model & Discrete \\
		& Independent Cascades & No & Synchronous Model & Discrete \\
		& Profile & No & Synchronous Model & Discrete \\
		& Profile Threshold & No & Synchronous Model & Discrete \\
		\midrule
		\multirow{6}{*}{Opinion Dynamics Models}
		& Voter & No & Asynchronous model & Discrete \\
		& Q-Voter & No & Asynchronous model & Discrete \\
		& Majority Rule & No & Asynchronous model & Discrete \\
		& Sznajd & No & Asynchronous model & Discrete \\
		& Weighted Hegselmann-Krause & No & Synchronous Model & Continuous \\
		& Hegselmann-Krause & No & Synchronous Model & Continuous \\
		\midrule
		\multirow{10}{*}{Dynamic Graph Models}
		& DySI & Yes & Synchronous Model & Discrete \\
		& DySIS & Yes & Synchronous Model & Discrete \\
		& DySIR & Yes & Synchronous Model & Discrete \\
		& DySEIR(DT) & Yes & Synchronous Model & Discrete \\
		& DySEIR(CT) & Yes & Synchronous Model & Discrete \\
		& DySEIS(DT) & Yes & Synchronous Model & Discrete \\
		& DySEIS(CT) & Yes & Synchronous Model & Discrete \\
		& DySWIR & Yes & Synchronous Model & Discrete \\
		& DyThreshold & Yes & Synchronous Model & Discrete \\
		& DyKerteszThreshold & Yes & Synchronous Model & Discrete \\
		\bottomrule
		\multicolumn{5}{l}{
			\footnotesize        
			$*$: Synchronous/Asynchronous in the table refers to updating all/some nodes in one time step.
		}
	\end{tabular}
	\label{tab:29s}
\end{table*}

FS\_GPlib includes 29 built-in propagation models classified into three categories: Epidemic Models, Opinion Dynamics Models, and Dynamic Graph Models (Table \ref{tab:29s}). All models support both directed and undirected graphs, with certain models additionally supporting weighted graphs. Each model is characterized by two primary dimensions: iteration mode (synchronous vs. asynchronous) and state representation (discrete vs. continuous). In synchronous iteration, all node states are updated in a single step, whereas asynchronous iteration updates one node or subset per step. Discrete models restrict node states to finite sets, while continuous models allow states within continuous intervals.

\subsection{Algorithm Usage}

\subsubsection{Basic Workflow}

FS\_GPlib relies on the \textit{torch} and \textit{PyG} libraries, which must be installed beforehand. Using FS\_GPlib involves two steps: instantiating the model and running simulations.

To instantiate a propagation model $f(\cdot)$, specify the graph $data$, initial node set $seeds$, model parameters $\{params\}$, and computing device $device$:
\begin{equation}
    \text{Model} = f(data, seeds, \{params\}, device).
    \label{eq:model}
\end{equation}

Four simulation execution interfaces are available:
\begin{itemize}\label{4_interfaces}
    \item \textit{Model.run\_iteration()}: Execute one time step from the current node state and return the updated node state.
    \item \textit{Model.run\_iterations(times)}: Execute multiple time steps starting from the current node state and return the final node state after all iterations.
    \item \textit{Model.run\_epoch(times)}: Execute multiple time steps starting from the initial state and return the final node state after all iterations.
    \item \textit{Model.run\_epochs(epochs, times, batch\_size)}: Perform multiple Monte Carlo simulations in batches, each starting from the initial state, and return the final node states.
\end{itemize}

The provided interfaces collectively satisfy common requirements for propagation analysis. They support both fine-grained tracking of dynamics (e.g., per-iteration evolution) and large-scale statistical analysis of final outcomes (e.g., Monte Carlo estimation of spread range). We give an example of usage in the section ~\ref{example}.

%added
\subsubsection{Distributed Partitioning for {\color{black}Web}-scale Graphs}

FS\_GPlib supports {\color{black}Web}-scale graphs using target-node partitioning (Section \ref{subs:partition}), which balances computational load, minimizes communication overhead, supports partitioning of both edges and one-dimensional edge features, and stores subgraph metadata to a specified path.The workflow is as follows:
\begin{enumerate}
    \item Instantiate the partitioner:
    \begin{equation}
        \text{partitioner} = \text{GraphPartitioner}(data, n\_parts, root).
    \end{equation}
    
    \item Generate and save partitions:
    \begin{equation}
        \text{partitioner.generate\_partition()}.
    \end{equation}
    
    \item Load specific subgraph partitions:
    \begin{equation}
        \text{subdata, subnodes} = \text{load\_partition}(root, partition\_idx).
    \end{equation}
\end{enumerate}

If $root\neq None$, partitioned subgraphs and node lists are saved under the specified root path.

This partitioning strategy scales propagation to larger graphs without device constraints. Simulations may run sequentially on one node or be accelerated in parallel across multiple nodes. 
While FS\_GPlib does not provide a built-in interface for distributed simulation workflows, a reference implementation is available. The library also includes a suite of specialized parameters and utility functions to facilitate flexible customization.

\subsection{Custom Model Development}
FS\_GPlib includes three main classes of pre-implemented models, derived from the base classes \textit{DiffusionModel} and \textit{Diffusion\_process}. \textit{DiffusionModel} handles data loading, seed nodes, parameters, device configuration, and operational interfaces. \textit{Diffusion\_process} encapsulates propagation logic. Users can either directly use existing models or extend these base classes to create custom propagation algorithms tailored to specific needs.

\section{Experimental Evaluation}\label{sec:experiment}

We conduct comprehensive experiments to evaluate FS\_GPlib’s accuracy, efficiency, and scalability. The evaluation includes:
\begin{enumerate}
    \item \textbf{Experimental Setup}: Describes the datasets, models, parameters, software, and hardware used in the experiments.

    \item \textbf{Accuracy Validation}: Verifies the correctness of FS\_GPlib’s implementations by comparing simulation outcomes against established baselines.
    \item \textbf{Efficiency Benchmarking and Acceleration Analysis}: Assesses both micro-level and macro-level acceleration by comparing runtime across implementations and analyzing the effect of batch size on simulation throughput.
    \item \textbf{Scalability on Billion-Scale Graphs}: Demonstrates the distributed simulation capability of FS\_GPlib.
    
\end{enumerate}

\subsection{Experimental Setup}\label{subs:setup}

We select six real-world graph datasets spanning various sizes: Cora, PubMed, Flickr, Yelp, Pokec, and Webbase (Table~\ref{tab:dataset}). Among them, Pokec is used solely for FS\_GPlib-based simulation, and Webbase is used solely for distributed simulation.

\begin{table}[ht]
    \caption{Summary of experimental graph datasets.}
    \label{tab:dataset}
    
    \begin{tabular}{cccc}
        \toprule
        Dataset&	Nodes $(|N|)$ &  Edges $(|E|)$ & Avg. Degree $\langle k \rangle$ \\
        \midrule
        Cora~\cite{yang2016revisiting}
        &	2,708& 	10,556 &	7.80 \\
        PubMed~\cite{yang2016revisiting}
        &	19,717& 	88,648 &	8.99  \\
        Flickr~\cite{zeng2019graphsaint}
        &	89,250& 	899,756 &	20.16 \\
        Yelp~\cite{zeng2019graphsaint}
        &	716,847& 	13,954,819	&38.93 \\
        Pokec~\cite{takac2012data}
        &	1,632,803&	30,622,564&	37.51\\
        Webbase~\cite{boldi2004webgraph}	
        &118,142,143&	992,844,891	&17.27\\
        \bottomrule
    \end{tabular}
\end{table}

To evaluate the accuracy and efficiency of FS\_GPlib’s propagation models, we compare them against several widely used or highly efficient implementations of the IC and SIR models~\cite{youssef2011individual}: (1) a custom implementation using \textit{NetworkX} (Nx-*); (2) the \textit{NDlib} framework (Nd-*); (3) the \textit{CyNetDiff} library for IC~\cite{robson2024cynetdiff} (CND-IC); (4) a GPU-based parallel IC implementation~\cite{su2025gpic}(GP-IC); (5) the \textit{EoN} library for SIR~\cite{kiss2017mathematics} (EoN-SIR), which achieved top performance in recent studies~\cite{samaei2025fastgemf}; (6) a C++ implemented Python module \textit{Graph-tool}~\cite{peixoto_graph-tool_2014} for SIR (GT-SIR).

Unless otherwise specified, the IC model is run with infection probability $\beta = 0.5$ until convergence; the SIR model uses infection probability $\beta = 0.01$, recovery rate $\lambda = 0.005$, and runs for 100 time steps; all simulations use 1,000 Monte Carlo trials. Seed nodes are selected by degree centrality (top 10\%). 

Baseline implementations only support CPU execution, except for GPIC. FS\_GPlib supports both CPU and GPU. CPU tests were conducted on a compute node with four Intel Xeon Platinum 8260 CPUs (24 cores and 48 threads each, 2.40 GHz fundamental frequency, 1.5 TiB RAM). GPU tests used eight NVIDIA RTX 4090 GPUs (24.5 GiB each) on a dual-socket 64-core Intel Xeon Platinum 8358 (2.6 GHz) server with 503 GiB RAM. {\color{black}Unless otherwise specified, only one GPU is used by default.} The software environment includes Python 3.10, PyTorch 2.1.2, CUDA 12.1, with CUDA\_VISIBLE\_DEVICES=0. Detailed device-level runtime performance is presented in Section~\ref{app:devices}.

\subsection{Accuracy Validation}

This section verifies the accuracy of FS\_GPlib’s propagation models by comparing their outputs against established implementations. We focus on two widely studied models—IC and SIR—and defer additional results to Table~\ref{tab:result} in Appendix~\ref{app:accuracy_other_models}.

Figures~\ref{fig:IC_3} and \ref{fig:SIR_3} compare the Monte Carlo simulation results of FS\_GPlib with two widely used implementations: \textit{NetworkX} (Nx-*) and \textit{NDlib} (Nd-*). Due to performance constraints, we ran 10 simulations for Nd-*, 100 for Nx-*, and 1,000 for FS-* to ensure reliable convergence. For IC, Figure~\ref{fig:IC_3} presents the average number of activated nodes under varying propagation probabilities; for SIR, Figure~\ref{fig:SIR_3} shows the average percentage of infection nodes across datasets. In both cases, FS\_GPlib reproduces the correct dynamics: its mean trajectories align closely with those of other methods. The broader confidence bands in FS\_GPlib may be attributable to the increased number of simulation samples. These results validate the fidelity of FS\_GPlib's model implementations.

\begin{figure}[t]
    \centering
    \begin{minipage}{0.23\textwidth}
        \centering
        \includegraphics[width=\textwidth]{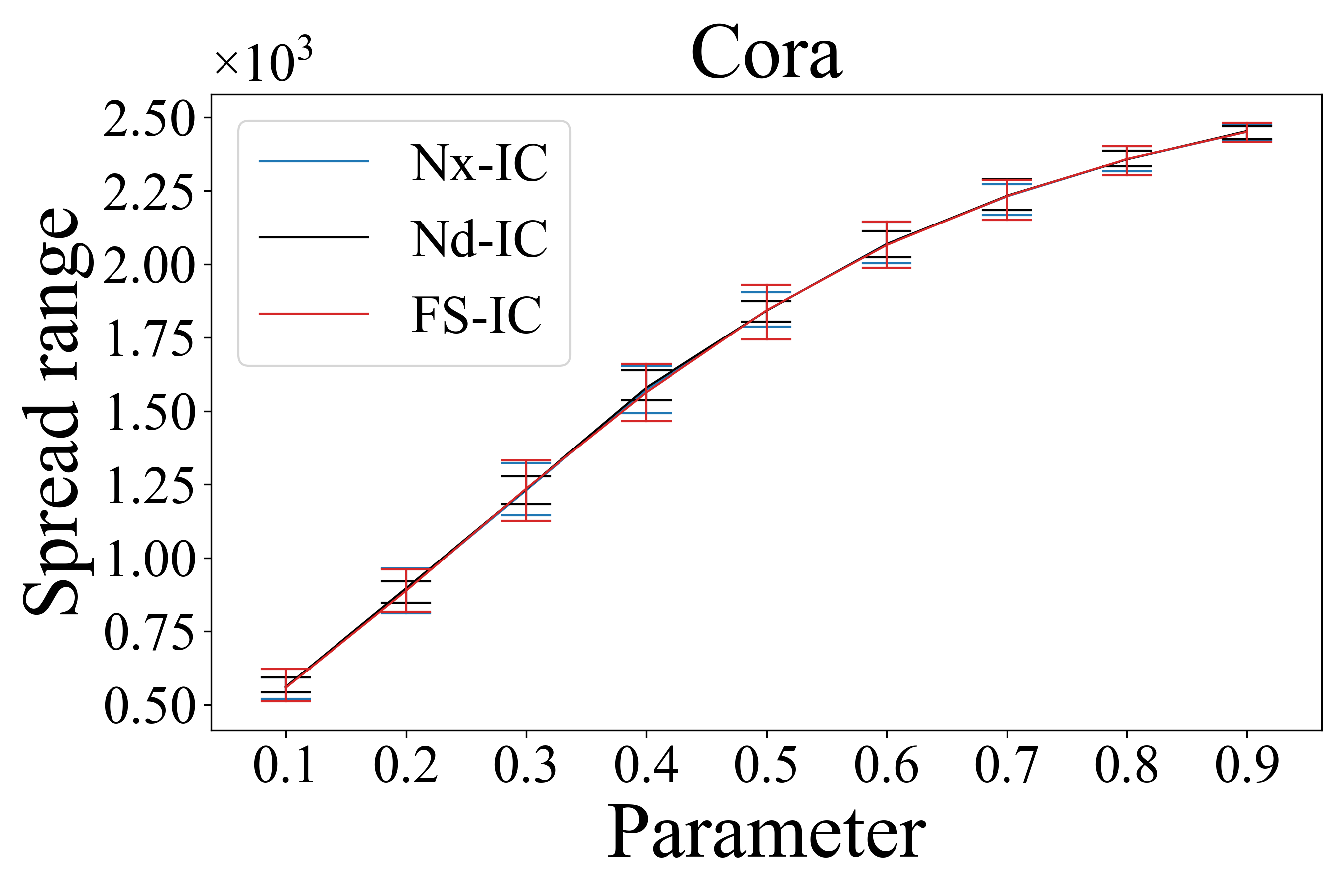}
        \small (a)
    \end{minipage}
    \begin{minipage}{0.23\textwidth}
        \centering
        \includegraphics[width=\textwidth]{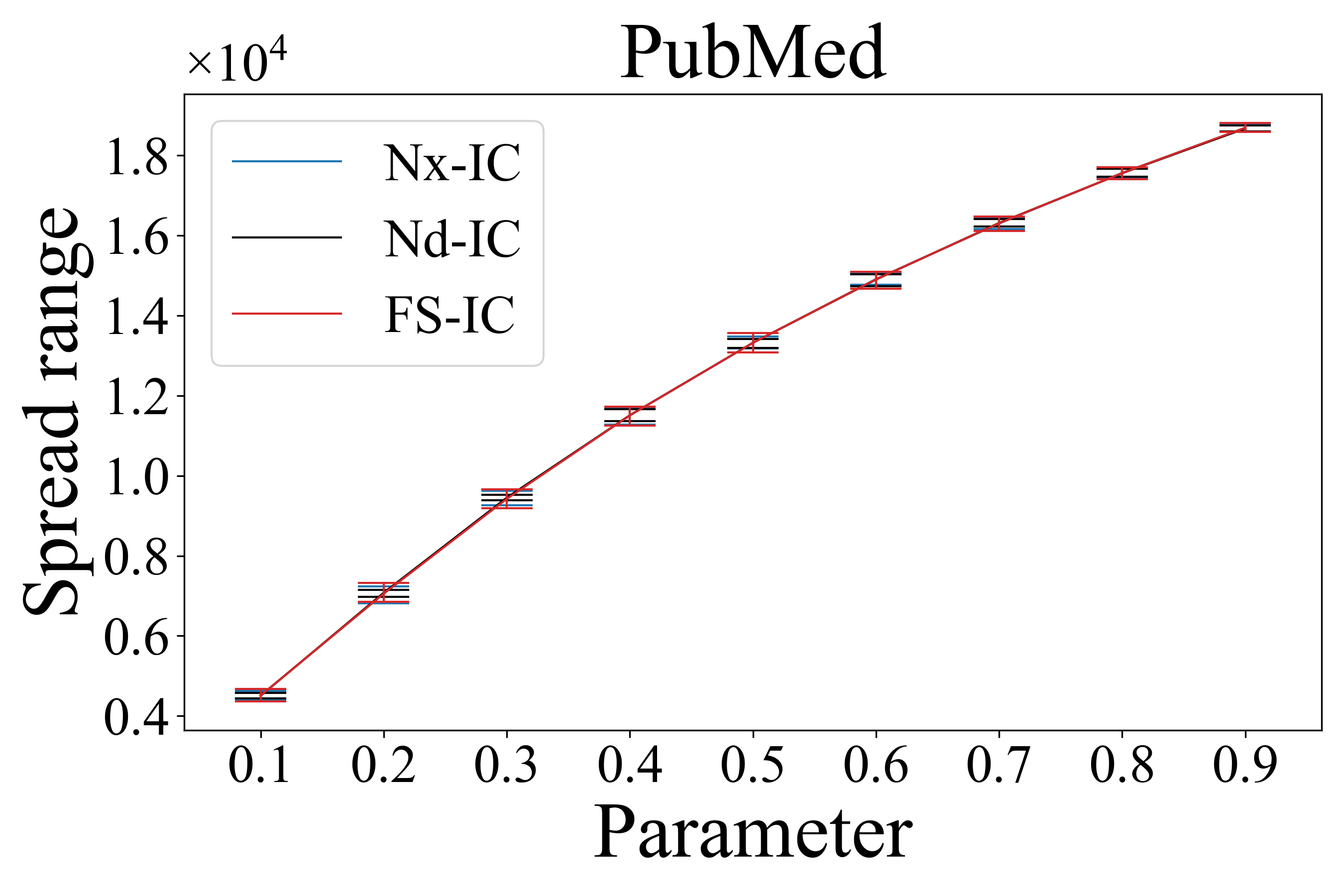}
        \small (b)  
    \end{minipage}
    
    \hspace{0.005\textwidth}
    
    \begin{minipage}{0.23\textwidth}
        \centering
        \includegraphics[width=\textwidth]{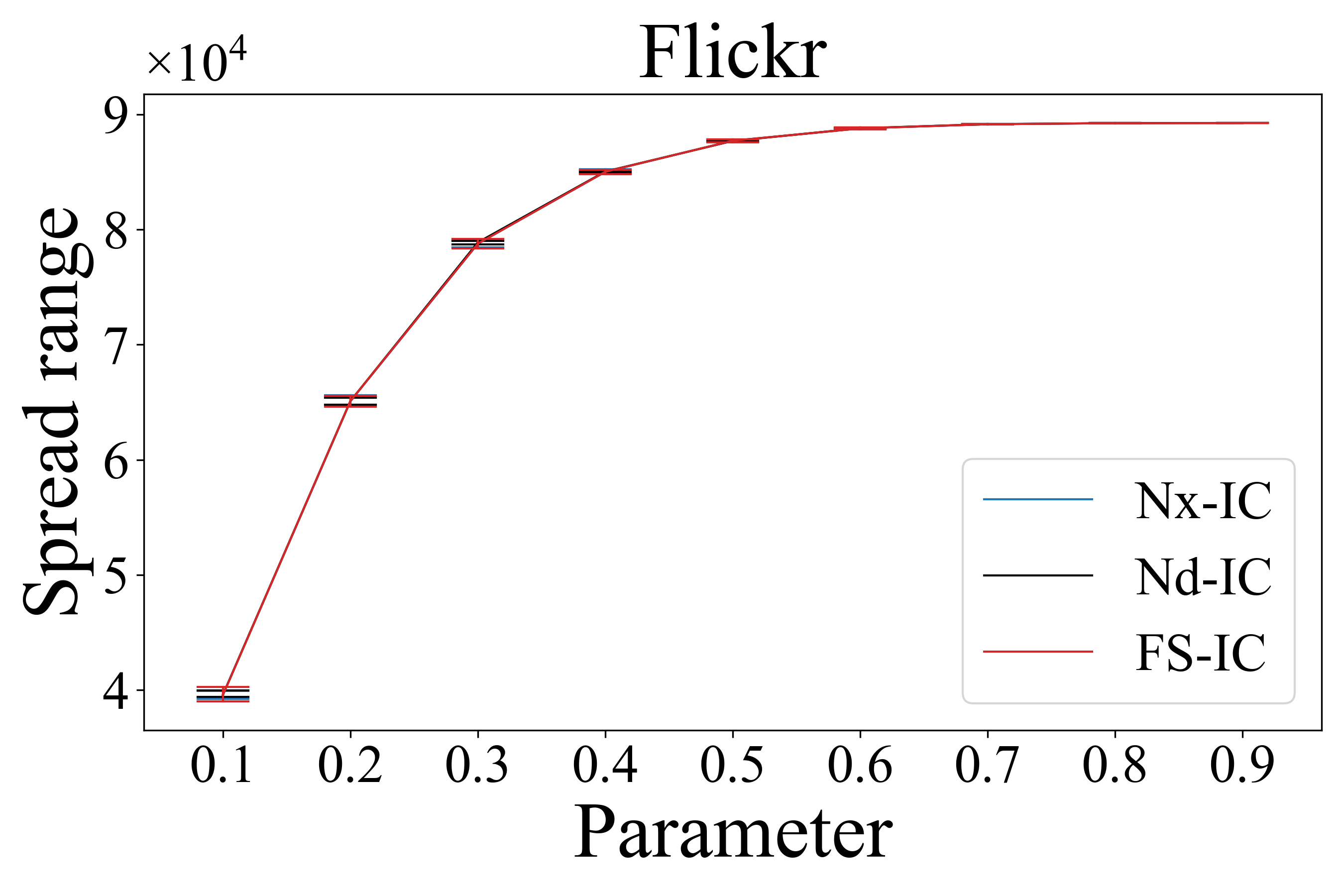}
        \small (c)
    \end{minipage}
    \begin{minipage}{0.23\textwidth}
        \centering
        \includegraphics[width=\textwidth]{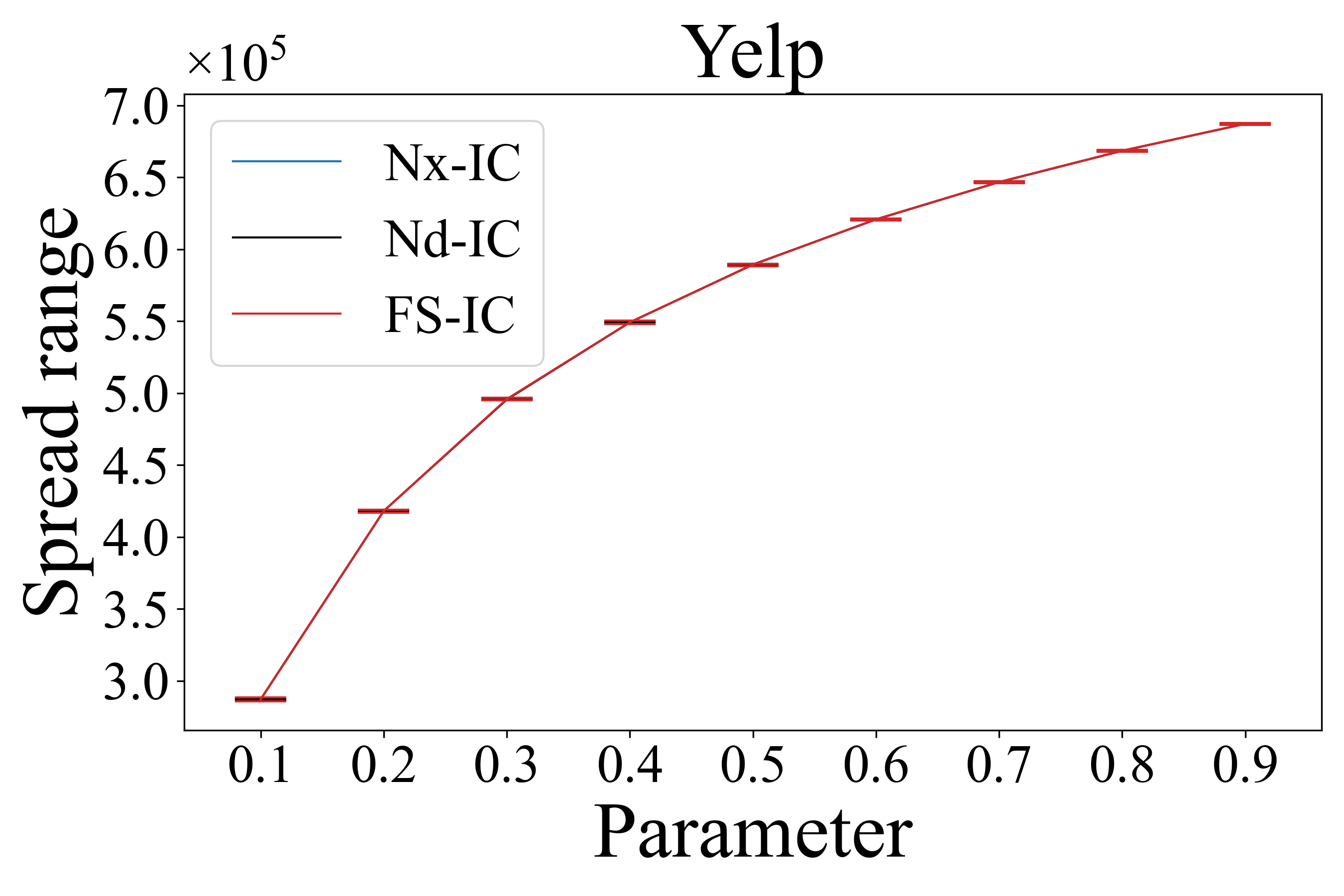}
        \small (d)
    \end{minipage}
    \caption{Monte Carlo simulation results for three IC implementations across four datasets, highlighting FS\_GPlib’s consistency and accuracy.}
    \label{fig:IC_3}
    \Description{}
\end{figure}

\begin{figure}[ht]
    \centering
    \begin{minipage}{0.43\textwidth}
        \centering
        \includegraphics[width=\textwidth]{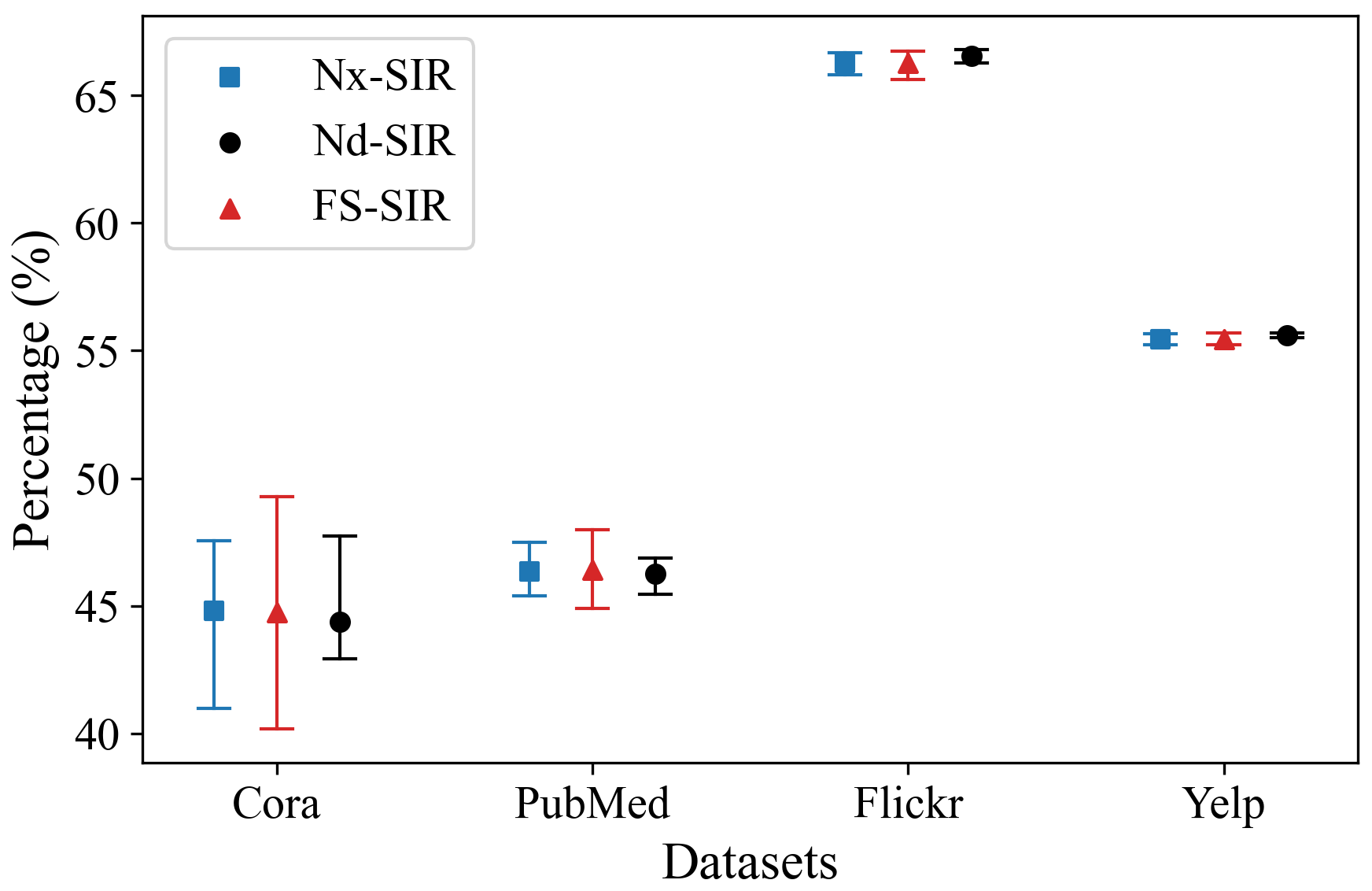}
    \end{minipage}
    \caption{Monte Carlo simulation results for three SIR implementations across datasets, illustrating FS\_GPlib’s consistency and accuracy.}
    \label{fig:SIR_3}
    \Description{}
\end{figure}

\subsection{Efficiency Benchmarking and Acceleration Analysis}

We evaluate the efficiency of FS\_GPlib through two complementary experiments.

\subsubsection{Implementation Runtime Comparison}\label{runtime}

To evaluate FS\_GPlib’s micro-level acceleration, we benchmark the runtime of 1{,}000 Monte Carlo simulations using several widely adopted implementations of the IC and SIR models: \textit{NetworkX} (Nx-*), \textit{NDlib} (Nd-*), CND-IC, GP-IC, EoN-SIR and GT-SIR. FS\_GPlib is tested in both CPU and GPU modes. Runtime results are presented in Tables~\ref{tab:IC_times} and~\ref{tab:SIR_times}, and Figures~\ref{fig:IC_4} and~\ref{fig:SIR_4} report relative speedups over the NDlib baseline.

\begin{table}
	\caption{Comparison of 1000 Monte Carlo simulations runtime performance (in seconds) among IC implementations on different datasets.}

\label{tab:IC_times}
\resizebox*{\linewidth}{!}{
	\begin{tabular}{ccccccc}
		\toprule
		Dataset& Nx-IC	&Nd-IC	&CND-IC& GP-IC&FS-IC(CPU)&	FS-IC(GPU)\\
		\midrule
		Cora & 17.79$^\dag$&	1233.64$^*$&\textbf{0.30}& 	1.01&5.00 &	3.01 \\
		PubMed & 167.24$^\dag$&	9223.87$^*	$& \textbf{2.43}& 3.46	&16.59 &	3.23  \\
		Flickr & 1689.79$^\dag$&	41669.81$^*	$& 26.30&49.66 &	50.45 &	\textbf{2.50}  \\
		Yelp &24644.47$^\dag$&	376977.50$^*	$& 	551.39& - &	1641.92& 	\textbf{10.75}  \\
		\bottomrule
	\end{tabular}
}
\begin{flushleft}
	\footnotesize

	$\dag$: This data is the time to perform 100 simulations multiplied by 10.\\
	$^*$: This data is the time to perform 10 simulations multiplied by 100.\\
\end{flushleft}
\end{table}

\begin{table}
\caption{Comparison of 1000 Monte Carlo simulations runtime performance (in seconds) among SIR implementations on different datasets.
}

\label{tab:SIR_times}
\resizebox*{\linewidth}{!}{
\begin{tabular}{ccccccc}
	\toprule
	Dataset &Nx-SIR&	Nd-SIR&	EoN-SIR&GT-SIR &FS-SIR(CPU)&	FS-SIR(GPU)\\
	\midrule
	Cora & 82.28$^\dag$&		314.11$^*	$& 17.15 &\textbf{6.02} &	60.17	&27.06 \\
	PubMed & 789.82$^\dag$&	2563.64$^*	$& 	177.00 &45.08	&177.60 &	\textbf{27.43 }   \\
	Flickr & 8172.21$^\dag$&		21055.04$^*	$& 	2240.73&168.55 &	898.33 &	\textbf{27.60}   \\
	Yelp &110627.07$^\dag$&	271245.34$^*	$& 	28765.07$^\dag$&  1546.66 &17975.58$^*$ &	\textbf{99.21}  \\
	\bottomrule
\end{tabular}
}
\begin{flushleft}
\footnotesize
$\dag$: This data is the time to perform 100 simulations multiplied by 10.\\
$^*$: This data is the time to perform 10 simulations multiplied by 100.\\
\end{flushleft}
\end{table}

\begin{figure}[ht]
	\centering
	\begin{minipage}{0.45\textwidth}
		\centering
		\includegraphics[width=\textwidth]{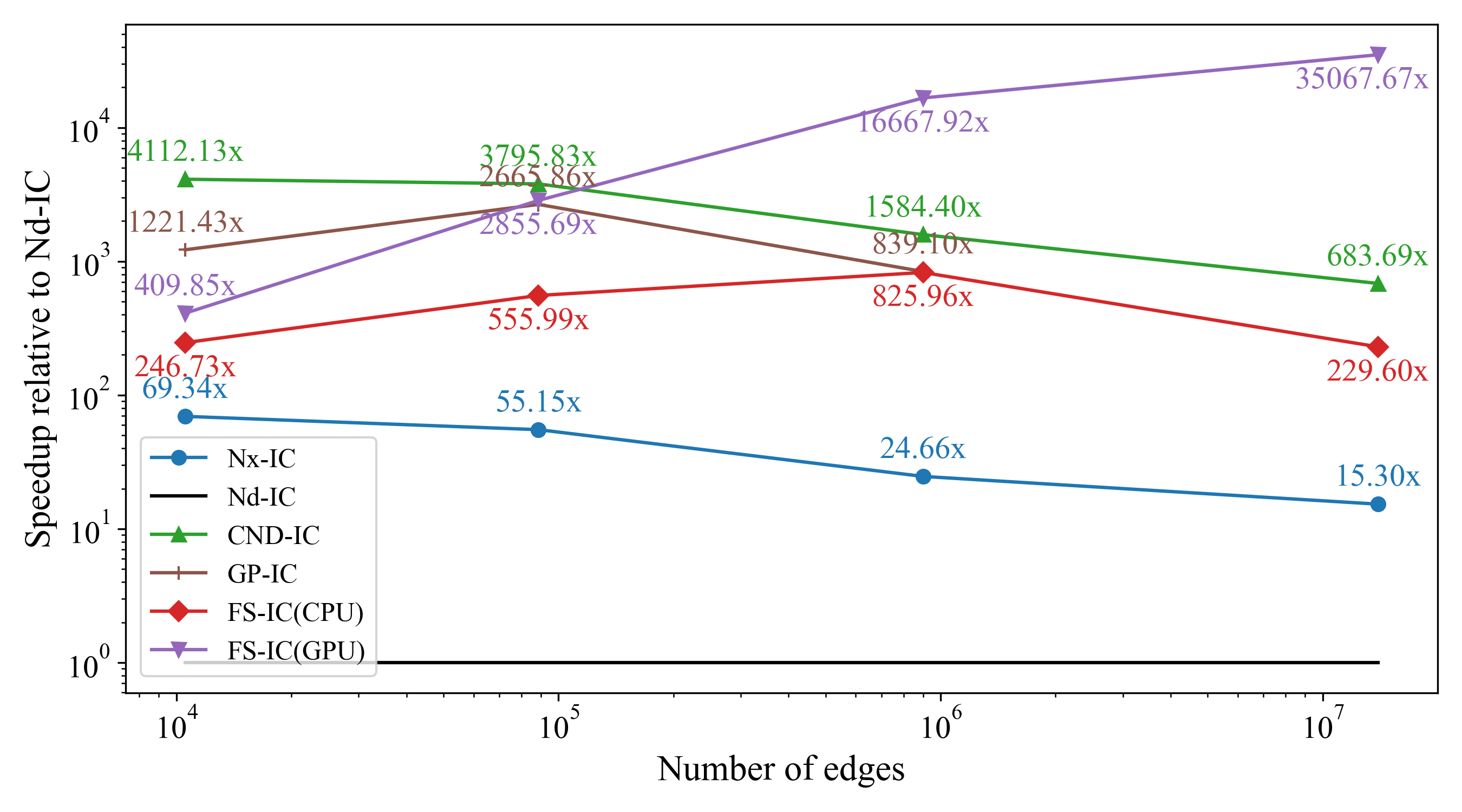}
	\end{minipage}
	\caption{Speedup of IC implementations over NDlib.}
	\label{fig:IC_4}
    \Description{}
\end{figure}

\begin{figure}[h]
	\centering
	\begin{minipage}{0.45\textwidth}
		\centering
		\includegraphics[width=\textwidth]{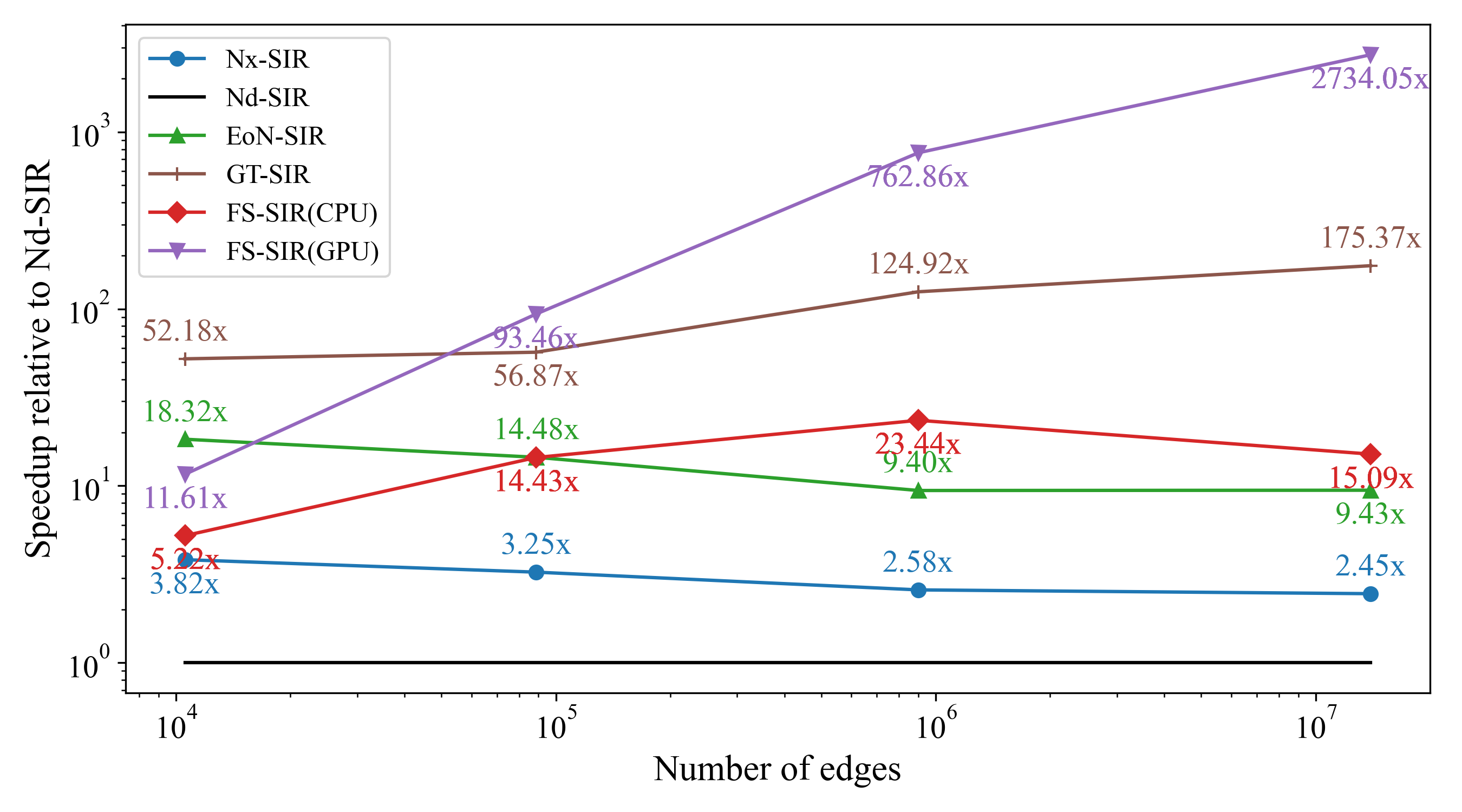}
	\end{minipage}
	\caption{Speedup of SIR implementations over NDlib.}
	\label{fig:SIR_4}
    \Description{}
\end{figure}

Among IC implementations, Nx-IC consistently outperforms Nd-IC but still exhibits high absolute runtimes. CND-IC performs best on small datasets; however, its efficiency gains diminish as graph size increases, indicating limited scalability. {\color{black}The performance of GP-IC is between CND-IC and FS-IC(CPU), but its processing of large-scale Yelp data exceeds memory, which is also the performance of limited scalability.} FS-IC(CPU) delivers competitive acceleration on small to medium graphs. In contrast, FS-IC(GPU) demonstrates robust scalability: its speedup relative to Nd-IC increases with dataset size, achieving over 35{,}000$\times$ acceleration on Yelp dataset—the shortest runtime among all implementations.

SIR simulation results follow a similar trend. Nx-SIR and EoN-SIR perform reasonably on smaller graphs but their efficiency drops as scale grows. FS-SIR(CPU) overtakes EoN-SIR on large datasets. GT-SIR runs the fastest on Cora data, but does not perform as well as FS-SIR(GPU) on larger datasets. Notably, the performance benefit of FS-SIR(GPU) significantly magnifies with graph size.

These results confirm FS\_GPlib’s strong micro-level acceleration capabilities, particularly its ability to scale efficiently on large networks by leveraging parallel hardware.

\subsubsection{Batch Size and Runtime Scaling}

We next evaluate FS\_GPlib’s macro-level acceleration by varying the batch size—i.e., the number of parallel Monte Carlo simulations executed per GPU pass. In theory, increasing batch size should linearly reduce total runtime due to amortized computation overhead. However, empirical results (Tables~\ref{tab:IC_batchsize} and \ref{tab:SIR_batchsize}) reveal diminishing returns beyond moderate sizes, especially on large-scale graphs where hardware bottlenecks begin to dominate.

\begin{corollary}[Optimal Batch Regime]
    There exists an upper bound on batch size $B_{\max}$ such that for any $B > B_{\max}$, the runtime of a single Monte Carlo simulation ceases to decrease and may even increase due to memory bandwidth saturation and GPU kernel scheduling overhead.
\end{corollary}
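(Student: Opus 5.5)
Since the statement is a qualitative consequence of finite hardware, we prove it with a cost model rather than the idealized machine of Theorem~\ref{thm:constant_micro} and Corollary~\ref{cor:macro_batch}. Model the device as having $p$ parallel processing units, on-device memory capacity $M$, and memory bandwidth $\beta_{\mathrm{mem}}$. One synchronous step on a batch of size $B$ must touch $\Theta(B(m+n))$ words: messages on all edges (Eq.~\ref{eq:ms}), aggregation (Eq.~\ref{eq:agg}), and updates (Eq.~\ref{eq:updata}), each replicated over the batch dimension as in Eq.~\ref{eq:batch}. The runtime of a batched step is then modeled as
\begin{equation}
T(B) = \tau_0 + \max\!\left\{ \frac{c_1 B m}{p},\ \frac{c_2 B m}{\beta_{\mathrm{mem}}} \right\} + h(B),
\end{equation}
where $\tau_0$ is the fixed kernel-launch and interpreter overhead per step and $h(B)\ge 0$ is a nondecreasing scheduling/contention term. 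The quantity in the corollary is the amortized per-simulation cost $t(B)=K\,T(B)/B$.

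The argument then proceeds in three steps. First, for small $B$ the device is underutilized. Theorem~\ref{thm:constant_micro} applies with $Bm \le p$, so $T(B)\approx \tau_0 + O(1)$ and $t(B)\approx K\tau_0/B$, which is strictly decreasing. This recovers the regime of Corollary~\ref{cor:macro_batch}. Second, define
\begin{equation}
B_{\max}=\min\{\lfloor p/(c_1 m)\rfloor,\ \lfloor \beta_{\mathrm{mem}}/(c_2 m)\rfloor,\ \lfloor M/(c_3(m+n))\rfloor\}.
\end{equation}
The third term is the largest batch that fits in memory, since batched state and message tensors require $\Theta(B(m+n))$ space. Third, for $B>B_{\max}$ the linear term dominates, so $T(B)\ge c\,Bm/\min(p,\beta_{\mathrm{mem}})$. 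It follows that $t(B)\ge cKm/\min(p,\beta_{\mathrm{mem}})$, a positive constant independent of $B$, and the fixed-overhead saving $K\tau_0/B$ is at most $K\tau_0/B_{\max}$. Hence $t$ no longer decreases beyond lower-order terms. Moreover, if $h$ grows superlinearly, or if $B(m+n)>M$ forces sequential chunking or paging with extra cost, then $t(B)$ strictly increases.

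The main obstacle is that ``ceases to decrease'' is not literally true for a pure $\tau_0/B + \text{const}$ model, which still decreases slightly. I would handle this in one of two ways. The first is to argue that past saturation the overhead term is negligible relative to the bandwidth term, giving non-decrease up to an $\epsilon$. The second, cleaner option is to fold the $\tau_0/B$ decrease into the contention term by assuming $h(B)/B$ is nondecreasing once the device saturates. That assumption is justified by wave quantization: the batch work splits into $\lceil Bm/p\rceil$ waves of kernel blocks, so per-simulation cost is stepwise nondecreasing. The final result should be stated explicitly as holding under this cost model, matching the empirical tables (Tables~\ref{tab:IC_batchsize} and~\ref{tab:SIR_batchsize}).
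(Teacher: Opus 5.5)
The paper does not prove this corollary. It is stated as an empirical regularity read off Tables~\ref{tab:IC_batchsize} and~\ref{tab:SIR_batchsize}: for example, FS-IC on Flickr falls from $2.50$\,s to $0.61$\,s at $B=10$ and then rises to $1.30$\,s at $B=1000$; FS-IC on Yelp and Pokec is already slower at $B=10$; and the ``-'' entries are out-of-memory runs. A cost model is therefore a genuinely different route, but as written it has a gap at exactly the step that carries the content. In your model, $\max\{c_1Bm/p,\ c_2Bm/\beta_{\mathrm{mem}}\}=c'Bm$ with $c'=\max\{c_1/p,\ c_2/\beta_{\mathrm{mem}}\}$ for every $B\ge 0$, so nothing happens at your $B_{\max}$. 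For all $B$ you have $t(B)=K\tau_0/B+Kc'm+Kh(B)/B$. Per-simulation compute is identical in the ``underutilized'' and ``saturated'' regimes, and the only $B$-dependence is the strictly decreasing $K\tau_0/B$ plus the contention term. Hence $t$ stops decreasing past $B_{\max}$ exactly when $(\tau_0+h(B))/B$ does, so your second fix assumes the conclusion. Its justification is also false. With wave quantization, $T(B)=\tau_0+c\lceil Bm/p\rceil$ gives $t(B)=K(\tau_0+c\lceil Bm/p\rceil)/B$. This is a sawtooth: it decreases like $1/B$ inside each wave and jumps up at wave boundaries. Its values at full waves $B=jp/m$ are $(Km/p)(c+\tau_0/j)$, which is strictly decreasing in $j$. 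A utilization floor $c\max\{1,Bm/p\}$ likewise leaves $t$ strictly decreasing.

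Your first fix does not recover the literal claim either. The ratio of the remaining saving $K\tau_0/B_{\max}$ to the plateau $Kc'm$ is $\tau_0/(c'mB_{\max})$, i.e.\ the fixed overhead measured against one saturated step of compute. This is a hardware constant with no reason to be small. So you only get an $\epsilon$-version, after enlarging $B_{\max}$ to roughly $K\tau_0/\epsilon$, which decouples it from the saturation point you defined. What your model actually proves is diminishing returns: $t(B)$ decreases monotonically to the positive floor $Kc'm$, and the remaining gain beyond $B$ is at most $K\tau_0/B$. That matches the paper's surrounding text. The ``ceases to decrease / may increase'' part cannot be derived from overhead-plus-throughput accounting; it must enter as an explicit hypothesis. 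One option is $h(B)-h(B_{\max})\ge(\tau_0+h(B_{\max}))(B-B_{\max})/B_{\max}$, which is equivalent to $t(B)\ge t(B_{\max})$. Another is to read $B_{\max}$ as the memory cap $M/(c_3(m+n))$, beyond which runs fail or must be chunked at extra cost. Either way, the result should be stated as conditional on that hypothesis.
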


This observation provides practical guidance for selecting batch sizes in large-scale simulations: choosing $B$ close to $B_{\max}$ ensures maximal parallelism without incurring resource contention.

\begin{table}[h]
	%	\hspace{1.5cm}
	\begin{minipage}[t]{0.45\textwidth}
		\centering
		\caption{FS-IC Runtime of 1000 Monte Carlo simulations under varying batch sizes.}
		\label{tab:IC_batchsize}
		\begin{tabular}{ccccc}
			\toprule
			batch size&	1	&10	&100&	1000\\
			\midrule
			Cora&	3.01&	0.86& 	0.55& 	0.53\\ 
			PubMed&	3.23& 	0.91& 	0.61& 	0.61\\ 
			Flickr&	2.50& 	0.61& 	1.10& 	1.30\\ 
			Yelp&	10.75& 	12.32& 	12.84& 	-\\
			Pokec&	23.27& 	25.97& 	-&	-\\
			\bottomrule
		\end{tabular}
	\end{minipage}
\end{table}

\begin{table}[h]
	\centering
	\begin{minipage}[t]{0.45\textwidth}
		\centering
		\caption{FS-SIR Runtime of 1000 Monte Carlo simulations under varying batch sizes.}
		\label{tab:SIR_batchsize}
		\begin{tabular}{ccccc}
			\toprule
			batch size&	1	&10	&100&	1000\\
			\midrule
			Cora&	27.06& 	3.29 &	0.86& 	0.66   \\
			PubMed&	27.43 &	3.32 &	1.02& 	1.24  \\
			Flickr&	27.60 &	4.82 &	6.32 &	6.60   \\
			Yelp&	99.21 &	94.99 &	92.45 &	-  \\
			Pokec&	229.82 &	216.02& 	-	&-\\
			\bottomrule
		\end{tabular}
	\end{minipage}%
\end{table}

\subsection{Distributed Propagation on Web-Scale Graphs}

We evaluate FS\_GPlib's scalability on a Web-scale graph—Webbase, containing over 118 million nodes and 990 million edges. The experiment was conducted on a single machine with eight GPUs (see Section~\ref{subs:setup}), of which six were used for computation.

As shown in Table~\ref{tab:Distributed}, partitioning the graph required only 42 seconds. We then executed 1{,}000 Monte Carlo simulations across six processes, completing all runs in approximately $1.15 \times 10^4$ seconds—averaging just 11.5 seconds per full-scale simulation. The final infection ratio converged to 29.63\%.

\begin{table}[h!]
	\caption{Distributed SIR simulation result on the Webbase of 1000 Monte Carlo simulations.}
	\label{tab:Distributed}
	\resizebox*{\linewidth}{!}{
		\begin{tabular}{cccc}
			\toprule
			Dataset &Time of partition(s) &  Time of simulation(s)& Expected infection rate(\%) \\ %& Max of infected percentage(\%) & Min of infected percentage(\%) 
			\midrule
			Webbase &	42.08 	& 1.15E+04 &	29.63\\
			% & 29.68 & 29.58 \\
			\bottomrule
		\end{tabular}
	}
\end{table}
These empirical results corroborate Theorem~\ref{thm:constant_micro} and Corollary~\ref{cor:macro_batch}, confirming that FS\_GPlib achieves near-constant amortized runtime per simulation as graph size increases. This marks a breakthrough: FS\_GPlib reduces the perceived $O(m)$ complexity of traditional agent-based simulation to practical constant time per run in theory and enables near–real-time propagation modeling on billion-scale graphs empirically, unlocking practical research and engineering applications at unprecedented scale.

\section{Summary}\label{sec:summary}

%new
We introduced \textbf{FS\_GPlib}, a unified Python framework for fast, scalable propagation simulations on large graphs. By integrating a dual-accelerated design—message-passing-based agent modeling and batched execution, and a distributed propagation simulation method, FS\_GPlib bridges the longstanding gap between efficiency and scalability in propagation modeling.

The framework achieves significant improvements over existing libraries, delivering up to \textbf{35,000$\times$ speedup} and enabling simulation on graphs with over \textbf{100 million nodes}. Our design is supported by a new theoretical result: under GPU-accelerated batched Monte Carlo sampling, the amortized runtime per simulation is \textbf{asymptotically constant} under realistic batching and parallelism assumptions, even on web-scale networks. This formally explains FS\_GPlib’s near-constant wall-clock time in practice and validates its scalability guarantee. FS\_GPlib lays the foundation for high-throughput, high-fidelity, and theoretically grounded propagation modeling at scale.

%%
%% The acknowledgments section is defined using the "acks" environment
%% (and NOT an unnumbered section). This ensures the proper
%% identification of the section in the article metadata, and the
%% consistent spelling of the heading.

\begin{acks}
    This work is supported by the National Natural Science Foundation of China (Grant Nos. T2293771, 62503447).

\end{acks}

\newpage
%%
%% The next two lines define the bibliography style to be used, and
%% the bibliography file.
\bibliographystyle{ACM-Reference-Format}
\balance
\bibliography{main}

@String{Computing = "Computing" }

@String{Springer = "Springer-Verlag" }

@article{acemoglu2015systemic,
	title={Systemic risk and stability in financial networks},
	author={Acemoglu, Daron and Ozdaglar, Asuman and Tahbaz-Salehi, Alireza},
	journal={American Economic Review},
	volume={105},
	number={2},
	pages={564--608},
	year={2015},
    month={},
    doi={10.1257/aer.20130456}
}

@inproceedings{barrett2008episimdemics,
	title={Episimdemics: an efficient algorithm for simulating the spread of infectious disease over large realistic social networks},
	author={Barrett, Christopher L and Bisset, Keith R and Eubank, Stephen G and Feng, Xizhou and Marathe, Madhav V},
	booktitle={SC'08: Proceedings of the 2008 ACM/IEEE Conference on Supercomputing},
    year={2008},
    volume={},
    number={},
    pages={1--12},
	organization={IEEE},
    doi={10.1109/SC.2008.5214892}
}

@inproceedings{boldi2004webgraph,
	title={The webgraph framework I: compression techniques},
	author={Boldi, Paolo and Vigna, Sebastiano},
	booktitle={Proceedings of the 13th international conference on World Wide Web},
	pages={595--602},
	year={2004},
    publisher = {Association for Computing Machinery},
    address = {New York, NY, USA},
    doi = {10.1145/988672.988752},
    series = {WWW '04}
}

@article{daley1964epidemics,
	title={Epidemics and rumours},
	author={Daley, Daryl J and Kendall, David G},
	journal={Nature},
	volume={204},
	number={4963},
	pages={1118--1118},
	year={1964},
	publisher={Nature Publishing Group UK London},
    doi={10.1038/2041118a0}
}

@article{devarapalli2024estimating,
	title={Estimating rumor source in social networks using incomplete observer information},
	author={Devarapalli, Ravi Kishore and Biswas, Anupam},
	journal={Expert Systems with Applications},
	volume={249},
	pages={123499},
	year={2024},
	publisher={Elsevier},
    doi = {10.1016/j.eswa.2024.123499}
}

@article{ghosh2022dynamics,
	title={Dynamics and control of delayed rumor propagation through social networks},
	author={Ghosh, Moumita and Das, Samhita and Das, Pritha},
	journal={Journal of Applied Mathematics and Computing},
	pages={1--30},
	year={2022},
	publisher={Springer},
    doi={10.1007/s12190-021-01643-5}
}

@article{gillespie2001approximate,
	title={Approximate accelerated stochastic simulation of chemically reacting systems},
	author={Gillespie, Daniel T},
	journal={The Journal of Chemical Physics},
	volume={115},
	number={4},
	pages={1716--1733},
	year={2001},
	publisher={American Institute of Physics},
    doi={10.1063/1.1378322}
}

@misc{apacheGiraph,
  title = {Giraph - {Welcome} {To} {Apache} {Giraph!}},
  author = {{Apache Software Foundation}},
  year = {2020},
  month = aug,
  url = {https://giraph.apache.org/},
  note = {Accessed: 2026-01-24.},
  howpublished = {Apache Giraph Official Website}
}

@article{goldenberg2001talk,
	title={Talk of the network: A complex systems look at the underlying process of word-of-mouth},
	author={Goldenberg, Jacob and Libai, Barak and Muller, Eitan},
	journal={Marketing letters},
	volume={12},
	pages={211--223},
	year={2001},
	publisher={Springer},
    doi={10.1023/A:1011122126881}
}

@article{goldenberg2001using,
	title={Using complex systems analysis to advance marketing theory development: Modeling heterogeneity effects on new product growth through stochastic cellular automata},
	author={Goldenberg, Jacob and Libai, Barak and Muller, Eitan},
	journal={Academy of Marketing Science Review},
	volume={9},
	number={3},
	pages={1--18},
	year={2001},
	publisher={Citeseer},

}

@inproceedings{gonzalez2012powergraph,
    title = {{PowerGraph}: Distributed {Graph-Parallel} Computation on Natural Graphs},
	author={Gonzalez, Joseph E and Low, Yucheng and Gu, Haijie and Bickson, Danny and Guestrin, Carlos},
	booktitle={10th USENIX symposium on operating systems design and implementation (OSDI 12)},
	pages={17--30},
	year={2012},
	address = {Hollywood, CA},
    publisher = {USENIX Association}
}

@inproceedings{gonzalez2014graphx,
	title={{GraphX}: Graph processing in a distributed dataflow framework},
	author={Gonzalez, Joseph E and Xin, Reynold S and Dave, Ankur and Crankshaw, Daniel and Franklin, Michael J and Stoica, Ion},
	booktitle={11th USENIX symposium on operating systems design and implementation (OSDI 14)},
	pages={599--613},
	year={2014}
}

@article{granovetter1978threshold,
	title={Threshold models of collective behavior},
	author={Granovetter, Mark},
	journal={American journal of sociology},
	volume={83},
	number={6},
	pages={1420--1443},
	year={1978},
	publisher={University of Chicago Press},
    doi={10.1086/226707}
}

@article{haldane2011systemic,
	title={Systemic risk in banking ecosystems},
	author={Haldane, Andrew G and May, Robert M},
	journal={Nature},
	volume={469},
	number={7330},
	pages={351--355},
	year={2011},
	publisher={Nature Publishing Group UK London},
    doi={10.1038/nature09659}
}

@article{jain2023optimal,
	title={Optimal control of rumor spreading model on homogeneous social network with consideration of influence delay of thinkers},
	author={Jain, Ankur and Dhar, Joydip and Gupta, Vijay K},
	journal={Differential Equations and Dynamical Systems},
	volume={31},
	number={1},
	pages={113--134},
	year={2023},
	publisher={Springer},
    doi={10.1007/s12591-019-00484-w}
}

@article{jenness2018epimodel,
	title={EpiModel: an R package for mathematical modeling of infectious disease over networks},
	author={Jenness, Samuel M and Goodreau, Steven M and Morris, Martina},
	journal={Journal of Statistical Software},
	volume={84},
	pages={1--47},
	year={2018},
    doi={10.18637/jss.v084.i08}
}

@article{karypis1998fast,
	title={A fast and high quality multilevel scheme for partitioning irregular graphs},
	author={Karypis, George and Kumar, Vipin},
	journal={SIAM Journal on Scientific Computing},
	volume={20},
	number={1},
	pages={359--392},
	year={1998},
	publisher={SIAM},
    doi={10.1137/S1064827595287997}
}

@inproceedings{kempe2003maximizing,
	title={Maximizing the spread of influence through a social network},
	author={Kempe, David and Kleinberg, Jon and Tardos, {\'E}va},
	booktitle={Proceedings of the ninth ACM SIGKDD international conference on Knowledge discovery and data mining},
	pages={137--146},
	year={2003},
    location = {Washington, D.C.},
    series = {KDD '03},
    doi = {10.1145/956750.956769}
}

@article{kermack1927contribution,
	title={A contribution to the mathematical theory of epidemics},
	author={Kermack, William Ogilvy and McKendrick, Anderson G},
	journal={Proceedings of the Royal Society of London. Series A, Containing Papers of a Mathematical and Physical Character},
	volume={115},
	number={772},
	pages={700--721},
	year={1927},
	publisher={The Royal Society London},
    doi={10.1098/rspa.1927.0118}
}

@book{kiss2017mathematics,
  title     = {Mathematics of Epidemics on Networks},
  author    = {Kiss, Istv{\'a}n Z and Miller, Joel C and Simon, P{\'e}ter L},
  publisher = {Springer},
  address   = {Cham},
  year      = {2017},
  doi       = {10.1007/978-3-319-50806-1}
}

@inproceedings{koster2022gpu,
	title={GPU-Accelerated Simulation Ensembles of Stochastic Reaction Networks},
	author={K{\"o}ster, Till and Herrmann, Leon and Andelfinger, Philipp and Uhrmacher, Adelinde},
	booktitle={2022 Winter Simulation Conference (WSC)},
    volume={},
    number={},
	pages={2570--2581},
	year={2022},
	organization={IEEE},
    doi={10.1109/WSC57314.2022.10015448}
}

@article{li2022influence,
	title={An influence maximization method based on crowd emotion under an emotion-based attribute social network},
	author={Li, Weimin and Li, Yaqiong and Liu, Wei and Wang, Can},
	journal={Information Processing \& Management},
	volume={59},
	number={2},
	pages={102818},
	year={2022},
	publisher={Elsevier},
    doi={10.1016/j.ipm.2021.102818},
}

@inproceedings{ling2023deep,
	title={Deep graph representation learning and optimization for influence maximization},
	author={Ling, Chen and Jiang, Junji and Wang, Junxiang and Thai, My T and Xue, Renhao and Song, James and Qiu, Meikang and Zhao, Liang},
	booktitle={International conference on machine learning},
	pages={21350--21361},
	year={2023},
	organization={PMLR}
}

@inproceedings{malewicz2010pregel,
	title={Pregel: a system for large-scale graph processing},
	author={Malewicz, Grzegorz and Austern, Matthew H and Bik, Aart JC and Dehnert, James C and Horn, Ilan and Leiser, Naty and Czajkowski, Grzegorz},
	booktitle={Proceedings of the 2010 ACM SIGMOD International Conference on Management of data},
	pages={135--146},
	year={2010},
    doi={10.1145/1807167.1807184},
    location = {Indianapolis, Indiana, USA},
    series = {SIGMOD '10}
}

@article{miller2020eon,
	title={Eon (epidemics on networks): a fast, flexible python package for simulation, analytic approximation, and analysis of epidemics on networks},
	author={Miller, Joel C and Ting, Tony},
	journal={arXiv preprint arXiv:2001.02436},
	year={2020},
    doi={10.48550/arXiv.2001.02436}
}

@article{pastor2015epidemic,
	title={Epidemic processes in complex networks},
	author={Pastor-Satorras, Romualdo and Castellano, Claudio and Van Mieghem, Piet and Vespignani, Alessandro},
	journal={Reviews of Modern Physics},
	volume={87},
	number={3},
	pages={925--979},
    year={2015},
    publisher = {American Physical Society},
    doi = {10.1103/RevModPhys.87.925}
}

@article{peixoto_graph-tool_2014,
	title = {The graph-tool python library},
	url = {http://figshare.com/articles/graph_tool/1164194},
	doi = {10.6084/m9.figshare.1164194},
	urldate = {2014-09-10},
	journal = {figshare},
	author = {Peixoto, Tiago P.},
	year = {2014}
}

@article{rainer2002opinion,
	title={Opinion dynamics and bounded confidence: models, analysis and simulation},
	author={Hegselmann, Rainer and Krause, Ulrich},
	year={2002},
    journal = {Journal of Artificial Societies and Social Simulation},
    volume = {5},
    number = {3}
}

@article{robson2024cynetdiff,
	title={CyNetDiff: A Python Library for Accelerated Implementation of Network Diffusion Models},
	author={Robson, Eliot W and Reddy, Dhemath and Umrawal, Abhishek K},
	journal={Proceedings of the VLDB Endowment},
	volume={17},
	number={12},
	pages={4409--4412},
	year={2024},
	publisher={Very Large Data Base Endowment Inc.},
    doi={10.14778/3685800.3685887},
}

@inproceedings{rossetti2018ndlib,
	title={NDlib: a python library to model and analyze diffusion processes over complex networks},
	author={Rossetti, Giulio and Milli, Letizia and Rinzivillo, Salvatore},
	booktitle={Companion Proceedings of the The Web Conference 2018},
	pages={183--186},
	year={2018},
    doi={10.1145/3184558.3186974},
    location = {Lyon, France},
    series = {WWW '18}
}

@article{samaei2025fastgemf,
	title={FastGEMF: Scalable High-Speed Simulation of Stochastic Spreading Processes Over Complex Multilayer Networks},
	author={Samaei, Mohammad Hossein and Sahneh, Faryad Darabi and Scoglio, Caterina},
	journal={IEEE Access},
	year={2025},
    volume={13},
    number={},
    pages={27112--27125},
	publisher={IEEE},
    doi={10.1109/ACCESS.2025.3539345}
}

@article{su2025gpic,
	title={GPIC: a GPU-based parallel independent cascade algorithm in complex networks},
	author={Su, Chang and Na, Xu and Zhou, Fang and L{\"u}, Linyuan},
	journal={Chinese Physics B},
	year={2025},
    doi = {10.1088/1674-1056/adb67c},
    volume = {34},
    number = {3},
    pages = {030204}
}

@inproceedings{sun2020influence,
	title={Influence maximization with spontaneous user adoption},
	author={Sun, Lichao and Chen, Albert and Yu, Philip S and Chen, Wei},
	booktitle={Proceedings of the 13th International Conference on Web Search and Data Mining},
	pages={573--581},
	year={2020},
    doi = {10.1145/3336191.3371791},
    numpages = {9},
    location = {Houston, TX, USA},
    series = {WSDM '20}
}

@inproceedings{takac2012data,
	title={Data analysis in public social networks},
	author={Takac, Lubos and Zabovsky, Michal},
	booktitle={International scientific conference and international workshop present day trends of innovations},
	volume={1},
	number={6},
	year={2012}
}

@inproceedings{yang2016revisiting,
	title={Revisiting semi-supervised learning with graph embeddings},
	author={Yang, Zhilin and Cohen, William and Salakhudinov, Ruslan},
	booktitle={International conference on machine learning},
	pages={40--48},
	year={2016},
    volume={48},
	organization={PMLR},

}

@article{youssef2011individual,
	title={An individual-based approach to SIR epidemics in contact networks},
	author={Youssef, Mina and Scoglio, Caterina},
	journal={Journal of Theoretical Biology},
	volume={283},
	number={1},
	pages={136--144},
	year={2011},
	publisher={Elsevier},
    doi={10.1016/j.jtbi.2011.05.029}
}

@article{zeng2019graphsaint,
	title={Graphsaint: Graph sampling based inductive learning method},
	author={Zeng, Hanqing and Zhou, Hongkuan and Srivastava, Ajitesh and Kannan, Rajgopal and Prasanna, Viktor},
	journal={arXiv preprint arXiv:1907.04931},
	year={2019},
    doi={10.48550/arXiv.1907.04931}
}

\newpage

%%
%% If your work has an appendix, this is the place to put it.
\appendix
%\renewcommand{\thefigure}{A\arabic{figure}} 
%\renewcommand{\thetable}{A\arabic{table}}  
%\setcounter{figure}{0}  
%\setcounter{table}{0} 
%\renewcommand{\thesubsection}{A\arabic{subsection}} 
%
%\section*{Appendix}\label{sec:appendix}

\section{Examples of Accelerated Model Implementations}
\label{SIRHK}
We illustrate our framework using two classical models: the discrete-state SIR model and the continuous-state HK model.

\subsection{Micro-accelerated implementation of SIR Model}

The SIR Model assumes that infection spreads only through links between neighboring nodes in a graph $G=(V,E)$. Each node is in one of three states: $S$, $I$, or $R$. A susceptible node $i$ is infected by its infected neighbors $j \in N(i)$ with rate $\beta$, while infected nodes recover with rate $\gamma$:
\begin{equation}
	\frac{dS_i}{dt} = -\beta \sum_{j\in N(i)} S_i I_j ,\quad
	\frac{dI_i}{dt} = \beta \sum_{j\in N(i)} S_i I_j - \gamma I_i ,\quad
	\frac{dR_i}{dt} = \gamma I_i .
	\label{eq:network_SIR}
\end{equation}

According to Figure \ref{fig:SIR}, node transitions follow two rules: 1) if a \textit{S} state node has \textit{I} state neighbors, each \textit{I} state neighbor transmits the infection to the \textit{S} state node with probability $\beta$; 2) the \textit{I} state node is recovered to the \textit{R} state with probability $\gamma$.

\begin{figure}[h]
	\centering
	\includegraphics[width=0.7\linewidth]{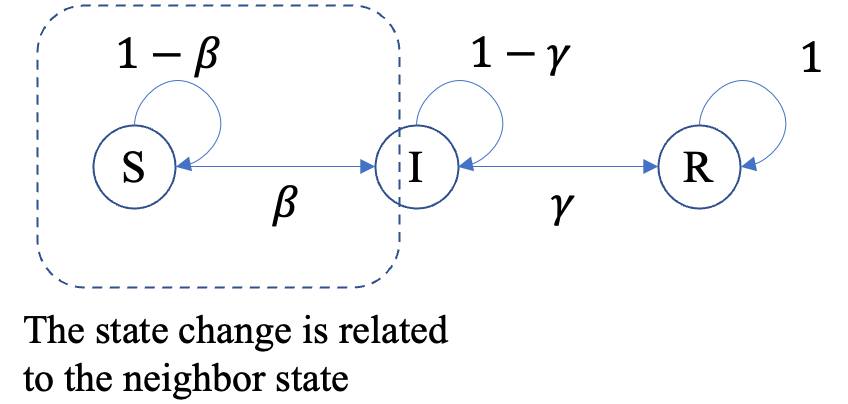}
	\caption{State transition diagram of SIR model.}
	\Description{}
	\label{fig:SIR}
\end{figure}

\begin{definition}[Micro-accelerated implementation of SIR Model]

	We represent node states using two Boolean indicator vectors $h, r \in \{0,1\}^N$, 
	where $h_i=1$ denotes \emph{infected}, $r_i=1$ denotes \emph{recovered}, and $(h_i,r_i)=(0,0)$ denotes \emph{susceptible}. 
	The update of the system at step $k$ is decomposed into three stages:

Each infected neighbor $j$ of node $i$ transmits a log-probability contribution

	\begin{equation}
	m_{ji}^{(k)} = h_j^{(k-1)} \cdot \log(1-\beta)
	\end{equation}
	
Node $i$ collects contributions from all neighbors $N(i)$ to compute its infection probability

	\begin{equation}
	m_i^{(k)} = 1 - \exp\!\left( \sum_{j \in N(i)} m_{ji}^{(k)} \right)
	\end{equation}

The indicator variables are updated with independent uniform random variables $U_i^{\mathrm{inf}}, U_i^{\mathrm{rec}} \sim \mathrm{Uniform}(0,1)$
\begin{equation}
\begin{aligned}
	h_i^{(k)} &=
	\begin{cases}
		1, & \text{if } (h_i^{(k-1)}=0 \land r_i^{(k-1)}=0) \land (U_i^{\mathrm{inf}} < m_i^{(k)}), \\[4pt]
		1, & \text{if } h_i^{(k-1)}=1 \land (U_i^{\mathrm{rec}} \ge \delta), \\[4pt]
		0, & \text{otherwise},
	\end{cases} \\[6pt]
	r_i^{(k)} &=
	\begin{cases}
		1, & \text{if } h_i^{(k-1)}=1 \land (U_i^{\mathrm{rec}} < \delta), \\[4pt]
		r_i^{(k-1)}, & \text{otherwise}.
	\end{cases}
\end{aligned}
\end{equation}

\end{definition}

After simulation, Boolean features are converted into an $N$-length state vector with entries 0, 1, or 2, denoting \textit{S}, \textit{I}, and \textit{R}.

\subsection{Micro-accelerated implementation of HK model}

The HK model \cite{rainer2002opinion} describes continuous opinion dynamics. Each node holds an initial opinion $h \in [-1,1]$ and updates it in discrete rounds. At each step: 1) $\epsilon$ neighbor identification: find the set $\Gamma_\epsilon$ of each node, $d_{i,j}=\left|h_i-h_j\right|\le\epsilon$, $j\in\Gamma_\epsilon$; 2) update the opinion value: The opinion of node $i$ at the next time step is updated as follows:

\begin{equation}
	h_i^{\left(k\right)}=\frac{\sum_{j\in\Gamma_\epsilon}{h_j^{\left(k-1\right)}}}{\#\Gamma_\epsilon},
	\label{eq:HK}
\end{equation}
where $\#\Gamma_\epsilon$ denotes the number of $\epsilon$ neighbors.

\begin{figure}[h]
	\centering
	\includegraphics[width=0.6\linewidth]{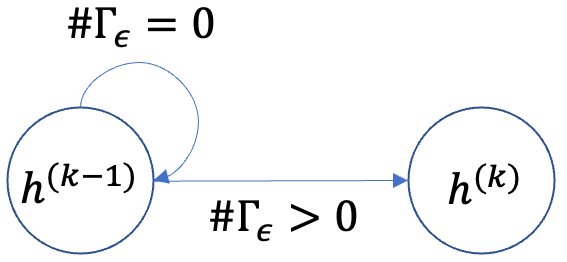}
	\caption{State transition diagram of HK model.}
	\Description{}
	\label{fig:HK}
\end{figure}

\begin{definition}[Micro-accelerated implementation of HK Model]
	Unlike the SIR model with three discrete states, the HK model uses a continuous opinion space. To avoid infinitely many states, we consider only two cases (Figure \ref{fig:HK}): if $\#\Gamma_\epsilon=0$, the opinion remains unchanged; otherwise, it is updated to the $\epsilon$-neighbors’ average.
	
	For each neighbor $j \in N(i)$, generate two message terms:
\begin{equation}
\begin{aligned}
	m_{ji\_c}^{(k)} &=\mathbf{1}\!\left( |h_i^{(k-1)} - h_j^{(k-1)}| < \varepsilon \right) \cdot h_j^{(k-1)}, \\[6pt]
	m_{ji\_v}^{(k)} &= \mathbf{1}\!\left( |h_i^{(k-1)} - h_j^{(k-1)}| < \varepsilon \right).
\end{aligned}
\end{equation}
where $\mathbf{1}(\cdot)$ is the indicator function.

Node $i$ aggregates received messages by averaging opinions of $\varepsilon$-neighbors:
\begin{equation}
\begin{aligned}
	m_i^{(k)} &= 
	\begin{cases}
		\dfrac{\sum\limits_{j \in N(i)} m_{ji\_c}^{(k)}}{\sum\limits_{j \in N(i)} m_{ji\_v}^{(k)}}, 
		& \text{if } \sum\limits_{j \in N(i)} m_{ji\_v}^{(k)} > 0, \\[8pt]
	\mathrm{NaN}, & \text{otherwise}.
	\end{cases} \\[10pt]
\end{aligned}
\end{equation}

Finally, the opinion of node $i$ is updated as
\begin{equation}
	\begin{aligned}
	h_i^{(k)} &= 
	\begin{cases}
		m_i^{(k)}, & \text{if } m_i^{(k)} \neq \mathrm{NaN}, \\[6pt]
		h_i^{(k-1)}, &  \text{otherwise}.
	\end{cases}
\end{aligned}
\end{equation}

\end{definition}
%This becomes the node's feature at the next iteration. 
The node feature $h$, which is the node's opinion value, is returned to the user after all iterations.

\section{Proofs}
\label{2proof}
\begin{proof}[Proof of Theorem~\ref{thm:constant_micro}]
	For each propagation step, the computational complexity is dominated by edge-level operations
	%\st{ (message passing and aggregation)}
	, requiring $O(m)$ work sequentially. When $p = \Theta(m)$ parallel processors can simultaneously accommodate all $m$ edges, the per-processor workload becomes $O(1)$. Under ideal parallel execution with negligible overhead, all processors complete their assigned work in parallel within constant time, achieving:
	\[
	T_{\text{step}} = O\left(\frac{m}{p}\right) = O(1).
	\]
	Thus, the runtime per update step is constant.
\end{proof}

\begin{proof}[Proof of Corollary~\ref{cor:macro_batch}]
	Running one simulation individually requires $O(1)$ time for $K=O(1)$ steps. For $S$ serial Monte Carlo simulations, it takes $O(S)$ time.
	With batch-parallel execution, $B$ simulations share tensor operations, resulting in total runtime $O(1)$ for the entire batch:
	\[
	T_{\text{sim}} = \frac{O(S)}{B} = O\left(\frac{S}{S}\right) = O(1).
	\]

	Hence, under batch-parallel settings, the runtime of all Monte Carlo simulations is constant.
\end{proof}

\section{Examples of Use}\label{example}
Listing \ref{lst:example} demonstrates usage: loading Cora dataset, selecting seeds via degree centrality, specifying SIR model parameters, performing 10,000 Monte Carlo simulations with batch processing, and computing the final propagation range.

\begin{figure}[ht]
%	\centering
	\begin{lstlisting}[style=mypython, caption=Python example: running Monte Carlo simulations using the SIR model with FS\_GPlib, label={lst:example}]
import torch
from torch_geometric.datasets import Planetoid
from torch_geometric.utils import degree
from FS_GPlib.Epidemics.SIRModel import SIRModel

def Degree_centrality_seeds(data, num_seeds):
	target = data.edge_index[1]
	D = degree(target, data.num_nodes)
	_, index = torch.sort(D, descending=True)
	seeds = index[:num_seeds]
	return seeds.tolist()

dataset = Planetoid(root='./dataset', name='Cora')
data = dataset[0]
num_seeds = int(data.num_nodes*0.1)
seeds = Degree_centrality_seeds(data, num_seeds)

i_beta=0.01 # infection
r_lambda=0.005 # recovery
d = 1 # device
MC = 10000 # simulation times
it = 100 # iteration times
bs = 2000 # batch size
model = SIRModel(data, seeds, i_beta, r_lambda, d)
finals = model.run_epochs(MC, it, bs)

count = (finals>0).sum().item()/MC
print(f'Final spread range is {count}')                                                                        
	\end{lstlisting}
    \Description{}
\end{figure}

\section{Performance of Different Devices}
\label{app:devices}

Considering that researchers work in diverse environments, this section assesses the performance of the algorithm across different devices. We selected four platforms: Server-Linux-CPU (SLC), server-Linux-GPU (SLG), PC-Linux-GPU (PLG), and PC-MacOS-CPU (PMC). Note that PLG setup uses WSL2 to run the Linux kernel within Windows. Table \ref{tab:Equipment} lists the hardware specifications and operating systems for each platform. Remaining software configurations are detailed in Section~\ref{subs:setup}.

\begin{table*}[ht]
    \caption{Equipment information of four devices.}
    \label{tab:Equipment}
    \resizebox*{\linewidth}{!}{
        \begin{tabular}{ccccc}
            \toprule
            &SLC&	SLG&	PLG	&PMC\\
            \midrule
            CPU &	Intel(R) Xeon(R) Platinum 8260 CPU @ 2.40GHz $\times$ 4	&Intel(R) Xeon(R) Platinum 8358 CPU @ 2.60 GHz $\times$ 2&	13th Gen Intel(R) Core(TM) i7-13700F&	Apple M1 Pro\\ 
            Core/Thread	& 24 Core/48 Thread&	32 Core/64 Thread&	12 Core/24 Thread	&8 Core \\ 
            Main frequency&	2.40 GHz&	2.60 GHz &	2.11 GHz&	-\\ 
            Memory & 1.5 TiB&	503 GiB&	7.6 GiB&	16 GB\\
            GPU & -	& NVIDIA GeForce RTX 4090 $\times$ 8 &NVIDIA GeForce RTX 4060 Ti $\times$ 1 &	-\\
            GPU video memory&	-&	24.5 GiB $\times$ 8&	8.0 GiB&	-\\
            Operating system/kernel	&Rocky Linux 8.5 &	Rocky Linux 8.5 &Ubuntu 22.04.5 LTS(WSL20	&macOS 15.2\\
            Other	& - &	CUDA\_VISIBLE\_DEVICES=0	& & \\	
            
            \bottomrule
        \end{tabular}
    }
\end{table*}

\begin{table*}[h!]
    \centering
    \begin{minipage}[t]{0.45\textwidth}
        \centering
        \caption{FS-IC Runtime of 1000 Monte Carlo simulations under varying devices.
        }
        %Running time of the IC model for four devices.}
    \label{tab:IC_device}
    \begin{tabular}{lcccc}
        \toprule
        device	&\  SLC&\ 	PLG	&\ \ SLG&PMC \\
        \midrule
        Cora&	\ 2.69&	\ 3.30&\ \ 3.01&	\textbf{1.41}  \\
        PubMed	&\ 10.15&\ 	3.80&\ \ \textbf{3.23}&	5.24  \\
        Flickr&	\ 31.16&	\ \textbf{2.50}&\ \ \textbf{2.50}&	23.75 \\
        Yelp&	\ 1271.11$^\dag$&\ 	33.34&\ \ \textbf{10.75}&	597.65 \\
        Pokec&	\ 2863.42$^\dag$&	\ 71.90&\ \ \textbf{23.27}&	1487.83$^\dag$\\
        \bottomrule
        \multicolumn{5}{l}{
            \footnotesize        
            $\dag$: This data is the time to perform 100 simulations multiplied by 10.
        }
    \end{tabular}

\end{minipage}%
\hspace{1.5cm}
\begin{minipage}[t]{0.45\textwidth}
    \centering
    \caption{FS-SIR Runtime of 1000 Monte Carlo simulations under varying devices.
        %Running time of the SIR model for four devices.
    }
    \label{tab:SIR_device}
    \begin{tabular}{lcccc}
        \toprule
        device	&SLC&	PLG	&SLG&PMC \\
        \midrule
        Cora &	30.73 	&26.49 &	27.06 &	\textbf{12.78} \\
        PubMed	&109.28& 	25.80& 	\textbf{27.43 }&	53.42 \\
        Flickr&	574.71& 	28.01& 	\textbf{27.60} &	431.03 \\
        Yelp&	13776.55$^*$&	367.62& \textbf{99.21} &	6888.00$^\dag$ \\
        Pokec&	31129.42$^*$&	868.66$^\dag$ &	\textbf{229.83 }&	16812.12$^*$\\
        \bottomrule
        \multicolumn{5}{l}{
            \footnotesize
            $\dag$: This data is the time to perform 100 simulations multiplied by 10.
            
        }\\
        \multicolumn{5}{l}{
            \footnotesize
            $^*$: This data is the time to perform 10 simulations multiplied by 100.
        }
    \end{tabular}

\end{minipage}
\end{table*}

Table~\ref{tab:IC_device} and Table~\ref{tab:SIR_device} present the runtime of IC and SIR models on four platforms using the same parameters as in Section~\ref{runtime} (batch size = 1). Each runtime is the average of three independent runs, and between runs we ensured device temperatures returned to normal to avoid thermal‐throttling artifacts. 
SLG dominates overall due to superior parallel acceleration, and PLG follows closely. However, PMC outperforms both GPUs on small-scale data with minimal batch sizes where GPU utilization is inefficient. The overall running time of SLC is relatively long, especially for large-scale data.

This experiment evaluated diverse computing environments (servers vs. PCs, GPU vs. CPU, different OS).  The algorithms ran successfully across all configurations with strong performance, offering practical guidance for device selection.

\section{Accuracy Validation of FS\_GPlib}\label{app:accuracy_other_models}

We validated FS\_GPlib on Cora and PubMed, with results (Table~\ref{tab:result}) consistent across NDlib, NetworkX, and our implementation. Algorithm settings are detailed in the code.

Due to computational constraints, not all models completed 1000 Monte Carlo simulations. Specifically, the Threshold, HK, and WHK models were executed only once. Simulation results show the expected counts of activated/infected nodes. For the HK and WHK models, nodes with a viewpoint value exceeding 0.5 were designated as activated/infected nodes. Notably, results of SWIR represent the expected sum of weak, infected, and recovered nodes. Implementation differences between some models and their NDlib counterparts precluded their inclusion in the comparison.

These computational results empirically establish FS\_GPlib's accuracy at the experimental level, thereby providing foundational validation for the efficiency and scalability improvements demonstrated in the main text.
\begin{table*}[htbp]
    \centering
    \caption{Simulation results for the three implementations.}
    \label{tab:result}
    \begin{tabular}{lp{0.09\textwidth}p{0.09\textwidth}p{0.09\textwidth}p{0.09\textwidth}p{0.09\textwidth}p{0.09\textwidth}}
        \toprule
        & \multicolumn{3}{c}{\textbf{Cora}} & \multicolumn{3}{c}{\textbf{PubMed}} \\
        \cmidrule(lr){2-4}\cmidrule(lr){5-7}
        \textbf{Model (parameters)} 
        & \textbf{NDlib} 
        & \textbf{Networkx} 
        & \textbf{FS\_GPlib} 
        & \textbf{NDlib} 
        & \textbf{Networkx} 
        & \textbf{FS\_GPlib} \\
        \midrule
        SIR (0.01,0.005) 
        & 1212.70
        & 1210.51 
        & 1211.64 
        & 9153.00
        & 9158.47 
        & 9151.78 \\
        IC (0.5) 
        & 1842.00$^\dag$
        & 1844.87$^\dag$
        & 1844.791
        & 13319.50
        & 13319.32 
        & 13318.44 \\
        SI (0.01) 
        & 1857.50$^*$
        & 1866.22$\dag$
        & 1868.57
        & 14235.8$^*$
        & 14306.69$^\dag$
        & 14312.23 \\
        SIS (0.01,0.005) 
        & 1579.60$^*$
        & 1587.92$^\dag$
        & 1587.012 
        & 12204.20$^*$
        & 12230.13$^\dag$
        & 12229.61 \\
        SEIR (0.01,0.005,0.05) 
        & 1052.58$^\dag$
        & 1053.95$^\dag$
        & 1051.98 
        & 8202.5$^*$
        & 8246.28$^\dag$
        & 8249.82 \\
        SEIS (0.01,0.005,0.05) 
        & 1208.57$^\dag$
        & 1211.69$^\dag$
        & 1213.51
        & 9645.5$^*$
        & 9697.68$^\dag$
        & 9684.86 \\
        SEIRct (0.01,0.005,0.05) 
        & -- 
        & 563.13$^\dag$
        & 564.60 
        & -- 
        & 4215.56$^\dag$
        & 4220.29 \\
        SEISct (0.01,0.0005,0.005) 
        & -- 
        & 838.74$^\dag$
        & 836.60
        & -- 
        & 6751.09$^\dag$
        & 6765.04 \\
        SWIR (0.01,0.005,0.05) 
        & -- 
        & 310.33$^\dag$
        & 309.41 
        & -- 
        & 2347.85$^\dag$
        & 2345.32 \\
        Threshold (0.5) $^\ddagger$
        & 2092 
        & 2092 
        & 2092 
        & 17484 
        & 17484 
        & 17484 \\
        KerteszThreshold (0.7,0.01,0.1) 
        & 2124.70$^*$
        & 2129.40$^*$
        & 2132.70 
        & 16741.30$^*$
        & 16727.60$^*$
        & 16751.35 \\
        Profile (0.5,0.01,0.1) 
        & -- 
        & 2471.40$^*$
        & 2473.68 
        & -- 
        & 18047.60$^*$
        & 18063.24 \\
        ProfileThreshold (0.7,0.25,0.01,0.1) 
        & -- 
        & 2143.70$^*$
        & 2143.61 
        & -- 
        & 15806.20$^*$
        & 15790.16 \\
        Voter 
        & 292.76
        & 293.16 
        & 293.28 
        & 2020.52 
        & 2020.83 
        & 2020.16 \\
        Qvoter (2,0.1) 
        & -- 
        & 285.76$^\dag$
        & 285.54 
        & -- 
        & 2018.51$^\dag$
        & 2017.48 \\
        MajorityRule (5) 
        & 227.34$^\dag$
        & 226.56$^\dag$
        & 227.73 
        & 1926.12$^\dag$
        & 1926.26$^\dag$
        & 1925.58 \\
        Sznajd 
        & 213.085 
        & 208.51$^\dag$
        & 208.09 
        & 1981.10$^\dag$
        & 1973.35$^\dag$
        & 1974.25 \\
        HK (0.32) $^\ddagger$
        & -- 
        & 377 
        & 377 
        & -- 
        & 2558 
        & 2558 \\
        WHK (0.32,0.2) $^\ddagger$
        & -- 
        & 882 
        & 882 
        & -- 
        & 6268 
        & 6268 \\
        \bottomrule
        \multicolumn{7}{l}{
            \footnotesize
            $\dag$: This data is the expectation of running 100 simulations.
            $^*$: This data is the expectation of running 10 simulations.
            $^\ddagger$: This model only perform one simulation.
        }
    \end{tabular}
\end{table*}

\end{document}